\documentclass[12pt, draftclsnofoot, onecolumn ]{IEEEtran}
\usepackage[dvips]{color}
\usepackage{epsf}
\usepackage{times}
\usepackage{epsfig}
\usepackage{graphicx}

\usepackage{amsmath}
\usepackage{amssymb}
\usepackage{amsxtra}
\usepackage{amsthm}
\usepackage{bbm}

\usepackage{here}
\usepackage{rawfonts}
\usepackage{times}
\usepackage{url}
\usepackage{cite}
\usepackage{comment}
\usepackage[utf8]{inputenc}
\usepackage{caption}
\usepackage{subcaption}

\usepackage{pstricks}
\usepackage{algorithm}
\usepackage{algpseudocode}
\usepackage{lipsum}
\usepackage{mathtools}

\newtheorem{corollary}{Corollary}
\newtheorem{theorem}{\bf Theorem}

\newtheorem{proposition}{\bf Proposition}

\makeatletter 
 
\hyphenation{op-tical net-works semi-conduc-tor}

\begin{document} 
\title{ Distributed Conditional Generative Adversarial Networks  (GANs) for Data-Driven Millimeter Wave Communications in UAV Networks 
\thanks{A preliminary version of this work appears in the proceedings of IEEE ICC 2021 \cite{zhang2021distributed}.}} 
\author{ 
	\IEEEauthorblockN{Qianqian Zhang$^1$,  
		Aidin Ferdowsi$^1$,  
		Walid Saad$^1$,
		and Mehdi Bennis$^2$
	}
	
	\IEEEauthorblockA{\small 
		$^1$Bradley Department of Electrical and Computer Engineering, Virginia Tech, VA, USA,	
		Emails: \url{{qqz93,aidin,walids}@vt.edu}.  \\
		$^2$Centre for Wireless Communications, University of Oulu, Finland, Email: \url{mehdi.bennis@oulu.fi}. 
	}
} 
\maketitle

\setlength{\columnsep}{0.55cm}

\begin{abstract}
In this paper, a novel framework is proposed to perform  data-driven  air-to-ground channel estimation for millimeter wave (mmWave) communications in an unmanned aerial vehicle (UAV) wireless network. 
First, an effective channel estimation approach is developed to collect  mmWave channel information,   allowing each UAV to train a stand-alone channel model via a conditional generative adversarial network (CGAN) along each beamforming direction.   
Next, in order to expand the application scenarios of the trained channel model into a broader spatial-temporal domain,   a cooperative framework, based on a distributed CGAN architecture, is developed, allowing each UAV to collaboratively learn the mmWave channel distribution  in a fully-distributed  manner.   
To guarantee an efficient learning process,  necessary and sufficient conditions for the optimal UAV network topology that maximizes the learning rate for cooperative channel modeling are derived, and the optimal CGAN learning solution per UAV is subsequently characterized, based on the distributed network structure.    
Simulation results show that the proposed distributed CGAN approach is robust to the local training error at each UAV.  
Meanwhile, a larger  airborne network size requires more communication resources per UAV to guarantee an efficient learning rate.  
The results also show that, compared with a stand-alone CGAN  without information sharing and two other distributed schemes, namely: A multi-discriminator CGAN and a federated-learning CGAN method, the proposed distributed CGAN approach yields a higher modeling accuracy while learning the environment, and it achieves a larger average data rate  in the online performance of UAV downlink mmWave communications. 
 
\end{abstract}

{\small \emph{Index Terms} -- generative adversarial network; millimeter wave; UAV communications;  beyond 5G.}

\IEEEpeerreviewmaketitle

\section{Introduction}

Millimeter wave (mmWave) frequency bands are a  pillar of next-generation wireless systems as they will enable ultra-high-speed communications and airborne wireless networks \cite{akdeniz2014millimeter}. 
In order to overcome the fast attenuation of mmWave signals, multiple-input multiple-output (MIMO) technologies with highly-directional beamforming  are employed so as to increase the cell throughput and improve the communication reliability.  
Compared with the sub-6 GHz spectrum, the higher frequency of mmWave yields a shorter coherence time for the wireless channels.  Therefore, mmWave communication links are more time-sensitive and require frequent channel measurements \cite{zhang2020millimeter}. 
Meanwhile, highly-directional beamforming requires accurate knowledge of angle-of-arrivals (AoAs) and angle-of-departures (AoDs) for the propagation paths in order to achieve beam alignment between transmitter and receiver \cite{noh2020fast}.  
However, both the channel estimation and beam training  can incur  heavy communication overhead and reduce the spectrum  efficiency  \cite{dong2019deep}. 
Therefore, in order to improve the transmission performance, it is essential to have an accurate model to characterize a mmWave  link and estimate its underlying MIMO channels.

Compared with a terrestrial communication network, mmWave channel modeling for an airborne, drone-based wireless  system is more challenging \cite{dabiri2020analytical}. 
Airborne wireless networks have been increasingly considered as a suitable platform to deploy mmWave communications, due to a high possibility of line-of-sight (LOS) link states.    
For example, an unmanned aerial vehicle (UAV)-based station (BS) can dynamically adjust its location so as to maintain a LOS channel with transceivers \cite{zhang2019reflections}.     
However,  a mobile UAV BS must often provide connectivity to a much larger geographical area compared with a typical  terrestrial cellular BS. 
Moreover, compared to a terrestrial channel, the air-to-ground (A2G) channel  includes more model parameters, such as the 3D location and dynamic orientation of the UAV, which makes the channel modeling process  more challenging.  
Meanwhile, given that the mmWave channel response is time and location dependent, the statistical model generated from one communication environment experienced by a UAV, at a given time and spatial coordinate, cannot be flexibly generalized into  other temporal or spatial settings \cite{yang2019generative}.
As a result, traditional channel modeling methods, such as ray-tracing, can become very difficult and time-consuming to measure mmWave A2G channels efficiently.  
Indeed, most current A2G channel models are calibrated at sub-6 GHz frequencies, while a standard-defined A2G model over mmWave bands, as well as the experimental data,  has been very limited \cite{xia2020generative}.    
However, for a UAV BS, it is essential to  have an accurate A2G channel model to estimate the mmWave link state, and, thus, save pilot training time and transmit power for efficient communications.    
In order to address the challenges of mmWave A2G channel modeling, a \emph{data-driven} approach can be applied, where a UAV BS collects the A2G channel information during its cellular service, and then, build a  stochastic channel model to estimate the long-term channel parameters.   
Such a UAV can also  collaborate with neighboring UAVs to build a generalized spatial-temporal map of the mmWave environment.

\subsection{Related Works}
In order to capture the stochastic characteristics of mmWave channels, a number of data-driven modeling approaches were developed in \cite{akdeniz2014millimeter,dong2019deep}, and \cite{han2016two,alkhateeb2019deepmimo,xia2020generative,khawaja2019survey,polese2020experimental,cheng2020modeling,khawaja2017uav}.  
Traditional methods, such as spatial-temporal correlation \cite{akdeniz2014millimeter} and compressed sensing \cite{han2016two}, were investigated for characterizing mmWave MIMO transmissions. 
The authors in  \cite{dong2019deep} developed a neural network approach for characterizing mmWave transmissions and estimating MIMO channels. 
A deep learning dataset  is developed in \cite{alkhateeb2019deepmimo}  to extract the propagation feature of mmWave-based communication links. 
However, all of the proposed modeling frameworks  in \cite{dong2019deep,akdeniz2014millimeter,han2016two}, and \cite{alkhateeb2019deepmimo} focus on a terrestrial mmWave transmission scenario, and their results are not applicable for A2G channel modeling, due to distinct propagation environments.   
To specifically address the characteristics of airborne mmWave communications, some recent works in  \cite{xia2020generative} and \cite{polese2020experimental,cheng2020modeling,khawaja2019survey,khawaja2017uav} investigate the UAV-related mmWave channel from different perspectives. 
The authors in \cite{khawaja2019survey} provided a comprehensive survey on A2G propagation channel modeling for both the microwave and mmWave spectrum bands. 
The authors in \cite{xia2020generative} developed a generative neural network to predict the mmWave link state and model statistical channel parameters between a UAV and a ground BS.  
An empirical propagation loss model is proposed in \cite{polese2020experimental} based on an extensive  measurement for UAV-to-UAV communications at 60 GHz, and  a traditional ray tracing  method is applied in  \cite{cheng2020modeling}   to build a geometry-based stochastic model for  UAV-to-vehicle communications at 28 GHz. 
Furthermore, the received signal strength and delay spread of mmWave  transmissions is analyzed in \cite{khawaja2017uav} to  provide  further details for A2G channels.  
However,  all of the prior art in \cite{xia2020generative} and \cite{khawaja2019survey,polese2020experimental,cheng2020modeling,khawaja2017uav} studies the characteristics of mmWave channel models based on a single and local dataset, and, thus, the generated channel model is constrained by a limited amount of channel samples and  a few dedicated measurement environments. As such, these existing models cannot be used as a general and standardized model for A2G mmWave channels.

In order to extend the channel model to large-scale application scenarios, a promising solution is to use a cooperative modeling approach  with multiple, distributed channel datasets.  
In a recent work \cite{elbir2020federated}, the authors developed  a federated learning (FL) framework  to train the channel model from  distributed data sources.  However, the centralized network topology of the FL framework requires a global controller for information aggregation,  and, thus, it cannot operate in a fully distributed network as is the case in an airborne network.  
The work in \cite{park2020communication} characterized a time-varying channel model via continuous data exchange in a distributed wireless system. 
However, sharing the raw channel data in a real-time manner yields heavy communication overhead. 
Furthermore, beyond the discriminative models in \cite{dong2019deep} and \cite{elbir2020federated}, 
a generative machine learning model is applied in recent works \cite{yang2019generative} and \cite{ye2020deep} to model the wireless channel. 
The authors in \cite{yang2019generative} proposed a generative adversarial network (GAN) framework to model the wireless channel based on massive raw data, and the work in \cite{ye2020deep} employed a conditional GAN  to represent unknown channels to enable the encoding and modulation optimization, given the pilot training information.  
However, all of the prior works in \cite{yang2019generative} and \cite{elbir2020federated,park2020communication,ye2020deep} do not focus on the characteristic of mmWave frequencies or A2G wireless links. Thus, their results are not applicable to the mmWave channel estimation in the UAV communications. 
Given that  a generative model can learn the application range of the channel model from the temporal-spatial information in the dataset while training the channel features, it provides a better learning framework compared with a discriminative approach (e.g., such as those in \cite{dong2019deep} and \cite{elbir2020federated}).  
Remarkably, there are no fully distributed generative learning frameworks developed to deal with the problem of data-driven mmWave channel modeling in prior works. As such, to fill this gap, in this work, a fully-distributed cooperative  generative learning model will be developed  to characterize the environment of A2G mmWave links.   
\subsection{Contributions}
The main contribution of this paper is a novel framework that can perform  data-driven channel modeling for mmWave communications in a distributed UAV network.  
In particular, a  learning approach, based on a distributed conditional generative adversarial network (CGAN) is proposed for the UAV network to jointly learn the mmWave A2G channel characteristics from multiple, distributed datasets. 
In summary, our key contributions are:
\begin{itemize}
	\item First, we develop an effective channel measurement approach to collect  real-time A2G channel information over mmWave frequencies, allowing each UAV to train a stand-alone channel model via a CGAN at each beamforming direction.  
	
	\item Next,   to expand the application scenarios of the trained channel model into a broader spatial-temporal domain, we propose a cooperative learning framework, based on the distributed framework of brainstorming GANs \cite{ferdowsi2020brainstorming}. 
	This distributed generative approach allows  each UAV to learn the channel distribution from other agents in a fully distributed manner, while  characterizing  an underlying  distribution of the mmWave channels based on the entire  channel dataset of all the UAVs.  
	In order to avoid revealing the real measured data or the trained channel model to other agents, each UAV shares synthetic channel samples that are generated from its local  channel model in each  iteration.  	The proposed approach does not require any control center, and it can accommodate different types of neural networks.
	
	\item  To guarantee an efficient learning process in the distributed UAV system, we analytically derive  the probability of learning completion for the distributed CGAN learning at each iteration. 
	Then, we theoretically derive the necessary and sufficient conditions for the optimal UAV-to-UAV communication topology that maximizes the learning rate for cooperative channel modeling.  
	Finally, based on the structure of the distributed UAV network, we characterize the optimal CGAN learning solution per UAV.    
\end{itemize}

Simulation results show that the proposed distributed CGAN approach is resistant and robust to the local training error of each UAV.   
When the airborne network size becomes larger, more communication resources per UAV  are required to guarantee an efficient learning rate. 
Meanwhile,  in each iteration, by sharing more generated samples, the learning rate of the distributed CGAN scheme will increase, but the data transmission duration will also be larger.  
To ensure an efficient data transmission, a better wireless link state or a larger transmit power will be  required so as to improve the transmission rate.  
The results also show that, compared with a local CGAN  without information sharing and other distributed schemes, such as the multi-discriminator CGAN and the FL-based CGAN methods, the proposed distributed CGAN approach yields a higher modeling accuracy in the learning result, and it achieves a higher average data rate  in the online performance of UAV downlink mmWave communications.

The rest of this paper is organized as follows. 
Section \ref{sysModel} presents the communication model and data collection.  
The CGAN learning framework, distributed UAV network, and problem formulation are presented in Section \ref{channelModeling}. The optimal network topology and learning solutions are derived in Section \ref{solution}. 
Simulation results are shown in Section \ref{simulation}. 
Conclusions are drawn in Section \ref{conclusion}.

\section{Communication Model and Data Collection }\label{sysModel}

\subsection{Millimeter Wave Channel Model}
Consider an airborne cellular network, in which a set $\mathcal{I}$ of UAVs provide mmWave downlink communications  to  ground user equipment (UE).  
We assume that each UAV is equipped with an uniform linear array  of $M$ antennas, and the steering vector of the UAV's transmit antennas  is given by  
$\boldsymbol{a}_t(\phi^t) = [1,  e^{j\frac{\pi}{\lambda}  \sin(\phi^t)}, \cdots, e^{j(M-1)\frac{ \pi}{\lambda} \sin(\phi^t)} ]^T$,  
where $\lambda$ is the carrier wavelength, and $\phi^t \in [0,2\pi)$ is the AoD. 
Meanwhile, each UE is equipped with  an uniform linear array  of  $N$ antennas,  and the receiver's antenna vector  is  $\boldsymbol{a}_r(\phi^r)$ $ = $ $[1, $ $e^{j\frac{ \pi}{\lambda} \sin(\phi^r)}, $ $ \cdots, $ $e^{j(N-1)\frac{ \pi}{\lambda} \sin(\phi^r)} ]^T $ with  $\phi_k^r$ being the AoA. 
Consequently, the MIMO channel matrix 
$\boldsymbol{H} \in \mathcal{C}^{N\times M}$  can be given by \cite{han2016two} 
\begin{equation}\label{mimoChannel}
\boldsymbol{H} = \sum_{l=1}^{L} \alpha_l \boldsymbol{a}_r(\phi_l^r) \boldsymbol{a}_t^H (\phi_l^t),
\end{equation}
where  $(\cdot)^H$ is the conjugate transpose,  $L$ is the number of distinct paths, and  $\alpha_l  \in \mathcal{C}$,  $\phi_l^t$,  and $ \phi_l^r$ are  the  complex channel gain,  AoD, and  AoA of path $l$, respectively.   
Given the fact that the A2G channel over mmWave is very sensitive to blockage and has few scattering links, the value of $L$ will be much smaller than $M\times N$. 
Meanwhile, since a massive MIMO mmWave array can provide a narrow beam that eliminates much of the multipath \cite{swindlehurst2014millimeter}, we can assume that both the UAV and each UE apply a perfect  directional radiation technology for beam training purposes,   
such that both the transmitter's and receiver's antennas have only one main lobe without any side lobes. Thus,  the mmWave channel consists of a single path, i.e. $L=1$, which is either line-of-sight (LoS), reflected none-line-of-sight (NLoS), or complete outage. 
This assumption is supported by experimental results in \cite{khawaja2017uav} and \cite{khawaja2018temporal}, where a single path-component is observed from the  $60$ GHz UAV communications in the suburban and rural environments. 
Meanwhile, the transceiver design in \cite{sadhu20177} and other prior works in \cite{song2015adaptive,khalili2020optimal,hussain2018energy} can further support the use of a single-path assumption for general mmWave systems.  

Now, we consider a UAV located at 3D coordinates  $\boldsymbol{x}$ and a UE located at 3D coordinates $\boldsymbol{y}$, where  $\boldsymbol{x}$ and  $\boldsymbol{y}$ are available via a global positioning system (GPS) module, equipped at each UAV and UE, respectively. Then, at the service time $t$, the A2G MIMO channel matrix in (\ref{mimoChannel}) can be rewritten as
\begin{equation}
\boldsymbol{H}(\boldsymbol{x},\boldsymbol{y},t,\phi^t,\phi^r) =   \alpha(\boldsymbol{x},\boldsymbol{y},t,\phi^t,\phi^r)  \boldsymbol{a}_r(\phi^r) \boldsymbol{a}_t^H (\phi^t),
\end{equation}
where the channel gain $\alpha$   is jointly determined by the AoD-AoA pair $ \boldsymbol{\phi}  \triangleq  (\phi^t,\phi^r) $ of the mmWave path,  as well as  the communication environment  $(\boldsymbol{x},\boldsymbol{y},t)$. 
Here,  the service time $t$ is based on a local clock at the UAV.  

\subsection{Channel Estimation and Data Collection}
In a directional transmission system,  pilot training is necessary to align the beam direction between the UAV and its served UE.
 Thus, 	the the A2G mmWave channel information can be measured during the beam training stage.   
Due to the narrow beam, it is necessary to exploit a pre-determined codebook for beam alignment between the UAV and the UE.  
Let $K$ be the length of the codebook, and $(\boldsymbol{w}_k,\boldsymbol{q}_k)$ be the $k$-th pair of beamforming and combining vectors in the codebook. Then, the received pilot signal at the UE for the $k$-th pilot training is 
\begin{equation}\label{receivedPilot} \vspace{-0.1cm}
	r_k = \sqrt{P} \boldsymbol{q}_k^H \boldsymbol{H}_k  \boldsymbol{w}_k + \boldsymbol{q}_k^H \boldsymbol{n},
\end{equation}
where  $P$ is the transmit power of pilot symbols at the UAV, and  $\boldsymbol{n} \sim \mathcal{CN}(\boldsymbol{0},\sigma^2_{\textrm{UE}}\boldsymbol{I}_N)$ is the noise vector. 
Here, we assume that each UAV and UE will have  a  digital beamforming phased array   such that the direction information  of signal departure $\phi_k^t$ and signal arrival $\phi_k^r$ during the $k$-th pilot training can be uniquely determined by the beamforming vector $\boldsymbol{w}_k$ and  the combining vector $\boldsymbol{q}_k$, respectively. 
Therefore, the AoD can be given as  a function of the beamforming vector  $\phi_k^t(\boldsymbol{w}_k)$, and the  AoA as $\phi_k^r(\boldsymbol{q}_k)$.
The relationship between the beamforming direction and the digital weights of the antenna arrays has been widely studied and mathematically derived in   \cite{ noh2020fast},  \cite{bas2017real}, and   \cite{sadhu20177}. 
Then, let $\otimes$ be the Kronecker product, and $\textrm{vec}(\cdot)$ be the  vectorization of a matrix. 
The received pilot signal in (\ref{receivedPilot}) can be rewritten as
\begin{equation}\label{ps} \vspace{-0.1cm}
\begin{aligned}
r_k &= \sqrt{P} (\boldsymbol{w}_k^T \otimes \boldsymbol{q}_k^H) \textrm{vec}(\boldsymbol{H}_k)  + \boldsymbol{q}_k^H \boldsymbol{n}, \\
&= \sqrt{P} (\boldsymbol{w}_k^T \otimes \boldsymbol{q}_k^H) [\boldsymbol{a}_t^{*}(\boldsymbol{w}_k)  \otimes \boldsymbol{a}_r(\boldsymbol{q}_k)] \alpha_k(\boldsymbol{x},\boldsymbol{y},t,\boldsymbol{\phi}_k)  + \boldsymbol{q}_k^H \boldsymbol{n} , \\
&= \beta_k \alpha_k(\boldsymbol{x},\boldsymbol{y},t,\boldsymbol{\phi}_k)  + \boldsymbol{q}_k^H \boldsymbol{n},  
\end{aligned}
\end{equation} 
where $(\cdot)^{T}$ is the transpose, $(\cdot)^{*}$ is complex conjugate,  and $\beta_k = \sqrt{P} (\boldsymbol{w}_k^T \otimes \boldsymbol{q}_k^H) (\boldsymbol{a}_{t,k}^{*}  \otimes \boldsymbol{a}_{r,k}) \in \mathcal{C}$. 
After receiving $\{r_k\}_{k=1,\cdots,K}$, each UE will send the pilot training information to the UAV via  a sub-$6$ GHz uplink \cite{feng2018spectrum}.   
Note that the beamforming vector $\boldsymbol{w}_k$ and the combining vector $\boldsymbol{q}_k$  are chosen from a pre-determined codebook which is known by the UAV for beam training purposes, and the antenna steering vectors  $\boldsymbol{a}_t$ and  $\boldsymbol{a}_r$ can be uniquely determined based on  $\boldsymbol{w}_k$ and  $\boldsymbol{q}_k$ directly. 
Thus, the value of $\{ \beta_k\}_{\forall k}$ is completely known by the UAV.      
Therefore, according to  pilot signals in (\ref{ps}), the channel gain from the UAV located at $\boldsymbol{x}$ towards a UE located $\boldsymbol{y}$ during time $t$ with an AoD-AoA  pair $\boldsymbol{\phi}_k$ can be estimated via 
\begin{equation}\vspace{-0.1cm}
\begin{aligned}
 \tilde{\alpha}_k(\boldsymbol{x},\boldsymbol{y},t,\boldsymbol{\phi}_k) =  r_k \beta_k^{-1}  =  {\alpha}_k(\boldsymbol{x},\boldsymbol{y},t,\boldsymbol{\phi}_k) + \tilde{n}_k,
\end{aligned}
\end{equation}  
where $\tilde{n}_k = \beta^{-1}_k \boldsymbol{q}_k^H \boldsymbol{n} $ is the uncorrelated estimation error.

During the airborne cellular service, the channel gain  $\tilde{\alpha}_k$ can be measured and collected by each UAV over a spatial-temporal domain, with $K$ different pairs of antenna steering directions.  
We denote the channel dataset of a given UAV $i$ as $\mathcal{S}_i = \{\boldsymbol{s}_n,\boldsymbol{\phi}_n \}_{n = 1,\cdots,S_i}=\{\boldsymbol{x}_n,\boldsymbol{y}_n,t_n,\tilde{\alpha}_n, \boldsymbol{\phi}_n \}_{n = 1,\cdots,S_i}$, 
where $\boldsymbol{s}_n =\{\boldsymbol{x}_n,\boldsymbol{y}_n,t_n,\tilde{\alpha}_n\} $ is a channel sample, $\boldsymbol{\phi_n}$ is the AoA-AoD information associated with $\boldsymbol{s}_n$, and $ S_i = |\mathcal{S}_i| $ is the total number of channel samples. 
Here,  note that the pilot training stage is inevitable for beamforming transmissions, and the resulting  pilot signals are usually discarded after channel measurement.  In this work, we allow each UAV to keep the pilot signals, collect channel information, and build its own dataset $\mathcal{S}_i$. 
Given that all the channel components are available from pilot signals, the proposed data collection process does not require any additional communication overhead to  form the dataset $\mathcal{S}_i$ at each UAV $i$. 

Based on $\mathcal{S}_i$, each UAV can build its own model for estimating  A2G mmWave links in its dedicated measurement area. 
However,  traditional channel modeling approaches, such as ray tracing and regression, cannot provide a  suitable modeling framework for A2G mmWave links for several reasons. 
First, different from terrestrial wireless channels, there are limited studies on the characteristics of  A2G links over the mmWave spectrum (e.g., see \cite{khawaja2017uav} and \cite{khawaja2018temporal}, however, those works focus on the small-scale temporal and spatial characteristics of mmWave channel, but do not develop tractable  models).  
As a result, it is difficult to find a rigorous data-driven channel model to optimize mmWave A2G channel parameters, using  well-known regression methods \cite{cheng2020modeling}.  
Meanwhile,  the amount of channel  samples that each UAV owns is usually not  sufficient to build an accurate stochastic model that properly  captures the amplitude, phase and directional features of the MIMO channel response over a large spatial-temporal domain. 
For example, the altitude of a UAV  largely determines the LOS probability towards UEs, and, hence, an accurate estimation of the mmWave link state will require a large amount of measurement data, which mandates  cooperative learning.  
Moreover, the mmWave channel characteristics are location and time dependent.  
In order to build a general and standardized  A2G mmWave model, the channel data must be collected from multiple distinct communication environments, 
and the corresponding channel parameters will span a larger set of possible values, which brings more challenges for the channel modeling accuracy.  
To address the aforementioned  challenges for mmWave A2G channel estimation, 
we will introduce a data-driven deep learning approach with cooperative information sharing, such that an accurate channel model can be developed by each UAV  over a large-scale spatial and temporal domain.

\section{Distributed Channel Modeling via Conditional GANs}\label{channelModeling}

\subsection{Conditional GAN for Channel Modeling}

Given the channel dataset $\mathcal{S}_i$, each UAV $i$ can train a local channel  model, based on deep neural networks (DNNs).
For example, a  discriminative learning model, such as in \cite{dong2019deep} and \cite{elbir2020federated},  can be trained to take a spatial-temporal pair as input and outputs a complex channel gain.  
This discriminative framework enables a UAV to predict the mmWave channel given any new  input, however, it fails to capture any additional information from $\mathcal{S}_i$, other than the channel gain value. 
Indeed, based on the dataset  $(\boldsymbol{x}_n,\boldsymbol{y}_n,t_n)_{\forall n} \subset \mathcal{S}_i$, we can acquire the geographic area that UAV $i$ has previously visited and the time interval during which UAV $i$  measures the channel information. 
The underlying distribution of the spatial-temporal pairs in $\mathcal{S}_i$ will define the application range of the trained channel model.  
In order to jointly capture the applicable spatial-temporal domain from  $\mathcal{S}_i$ while learning the channel model, 
we propose a generative approach for the mmWave A2G channel modeling. 

\begin{figure}[!t]
	\begin{center}
		\vspace{-0.5cm}
		\includegraphics[width=12cm]{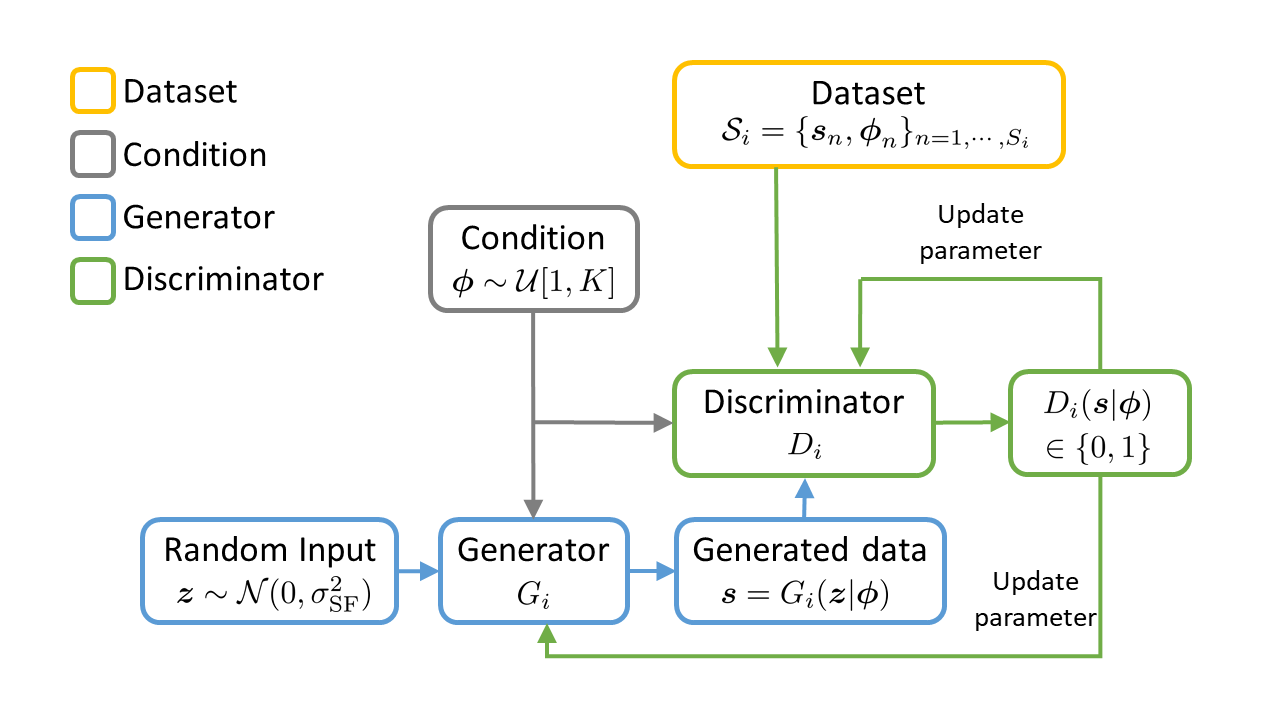}
		\vspace{-0.6cm}
		\caption{\label{CGAN} The learning framework of a stand-alone CGAN for each UAV $i$. }  
	\end{center}\vspace{-1.2cm}  
\end{figure}

Given that the pre-determined codebook defines a dedicated set of AoA-AoD pairs for each UAV and its downlink UE, the antenna steering information $\boldsymbol{\phi}$  can be considered as a prior knowledge that does not depend on channel measurement. 
Then, in order to model the channel distribution given different AoA-AoD directions, 
a conditional generative adversarial network (CGAN) framework \cite{mirza2014conditional} is applied, where each UAV $i$ has a condition sampler $\mathcal{U}$, a generator $G_i$, a discriminator $D_i$, and a local dataset $\mathcal{S}_i$, as shown in Fig. \ref{CGAN}.   
In each training epoch, the condition sampler draws an AoA-AoD pair   out of  $K$ possible directions, following a uniform distribution  $\boldsymbol{\phi} \sim \mathcal{U}[1,K]$, which is identical for each UAV, and the sampling result $\boldsymbol{\phi}$ will be used as the direction  condition in  the  CGAN training.   
Next, the generator $G_i(\boldsymbol{z}, \boldsymbol{\theta}^g_{i}|\boldsymbol{\phi})$, which is a DNN with a parameter vector $\boldsymbol{\theta}^g_i$,  maps a random input $\boldsymbol{z}$ to the channel sample space $\mathcal{S}$  under the condition $\boldsymbol{\phi}$, and the discriminator  $D_i(\boldsymbol{s},\boldsymbol{\theta}^d_i|\boldsymbol{\phi})$, which is another DNN with a parameter vector $\boldsymbol{\theta}^d_i$, takes a channel sample  $\boldsymbol{s}$ and the condition $\boldsymbol{\phi}$ as an input,  and outputs a value between $0$ and $1$. 
If the output of $D_i$ is close to $1$, then the input sample $\boldsymbol{s} = (\boldsymbol{x},\boldsymbol{y},t,\alpha)$ is highly likely to be a real data sample, which contains the channel gain $\alpha$ that is measured between the UAV located at $\boldsymbol{x}$ and the UE located at $\boldsymbol{y}$ during the time interval $t$ with an antenna steering pair $\boldsymbol{\phi}$; 
otherwise, a zero output of $D_i$ means that the input channel sample is fake. 
Therefore, the generator of each UAV $i$ aims to generate channel samples close to the real measurement data, while the discriminator tries to distinguish the fake data  from the real channel samples. 

Let $f_i$ be the channel sample distribution in the dataset $\mathcal{S}_i$, $f^G_{i}$ be the learned distribution of the generator for UAV $i$, and $f_i^z$ be the sampling distribution of the random input $\boldsymbol{z}$. 
Then, the goal of a stand-alone CGAN is to train its generator $G_i$ to find the channel distribution $f_i$ under each  condition $\boldsymbol{\phi}_k$, while the discriminator $D_i$ is used to quantify the learning accuracy of $G_i$.  
Hence, we model the interactions between the generator and discriminator of UAV $i$ by a zero-sum game framework with a value function \cite{mirza2014conditional}:
\begin{equation}\label{standaloneV}
	V_i(D_i, G_i) = \frac{1}{K} \sum_{k=1}^K \mathbb{E}_{\boldsymbol{s}\sim f_i  } \Big[  \log D_i(\boldsymbol{s}|\boldsymbol{\phi}_k) \Big] + 	\mathbb{E}_{\boldsymbol{z} \sim f^z_i} \Big[ \log(1-D_i(G_i(\boldsymbol{z}|\boldsymbol{\phi}_k))) \Big]. 
\end{equation}
For each condition $\boldsymbol{\phi}_k$, the first term in (\ref{standaloneV}) forces the discriminator to output one for the real data, and  the second term penalizes the generated data samples created by the  generator. 
Therefore, the generator of each UAV $i$ aims at minimizing the value function while its discriminator tries to maximize this value. 
It has been proven in \cite{goodfellow2014generative} that this stand-alone CGAN game admits a unique Nash equilibrium (NE) where $f^G_i = f_i$ and $D_i = 0.5$. 
At the NE, under each condition, the channel sample distribution of the generator is identical to the distribution of the dataset, and thus, the discriminator cannot distinguish between the generated samples and the real data.  
Instead, the discriminator will randomly output $0$ and $1$, which yields an average output  of $0.5$.

However, in practice,  each UAV only has a limited number of channel samples.  
Although a stand-alone CGAN can learn the channel representation of a UAV's local dataset, it can be biased and only feasible for a limited spatial-temporal domain. 
Once the UAV moves to an unvisited area or a new UE appears at a new location,   pilot measurement will again be necessary 
to acquire the propagation feature of the new environment  and update the channel model.  
However, both the data collection and the model update processes are time-demanding and energy-consuming for a UAV platform.  
Therefore,  to avoid repeated channel estimations within the same space, a UAV can learn the channel information from other UAVs that operated in this region. 
Here, we note that raw data exchange in a real-time manner among UAVs can yield heavy communication overhead and require a lot of energy and spectrum resource. 
Meanwhile, sharing the location-time information of  served UEs to an unauthorized UAV can raise privacy issues, especially when each UAV  belongs to a different network operator. 
Thus,  the  data exchange for mmWave A2G channel modeling in a distributed UAV network  must be communication-efficient and privacy-preserving.

\subsection{Distributed CGANs Framework} 
In order to address the challenge for  A2G mmWave  channel modeling, we propose a distributed CGAN framework, where a generative channel model is trained by each UAV to create channel samples from an underlying distribution of the overall dataset, without explicitly revealing the data distribution or showing real data samples.  
Note that, the use of distributed GANs is an emerging area of research in the machine learning community with only a handful of prior works \cite{ferdowsi2020brainstorming}, \cite{ferdowsi2019generative}, and \cite{hardy2019md}, none of which was used in the context of a wireless communication problem.
Meanwhile,  the existing works in \cite{ferdowsi2020brainstorming}, \cite{ferdowsi2019generative}, and \cite{hardy2019md} analyze their distributed GAN system, given that the network architecture is fixed and known, and each learning agent  is allowed to send any amount of data samples  to any number of other agents at each iteration, i.e., the data transmission capability over the communication link between any two agents is infinite,  and the link is highly reliable.  
However, these assumptions cannot be supported in a resource-limited and dynamic network.  
In order to enable the use of the distributed GAN into a realistic UAV system, a network formation approach  will be designed in the section to address practical communication constraints with limited information exchange.  

\begin{figure}[!t]
	\begin{center}
		\vspace{-0.5cm}
		\includegraphics[width=12cm]{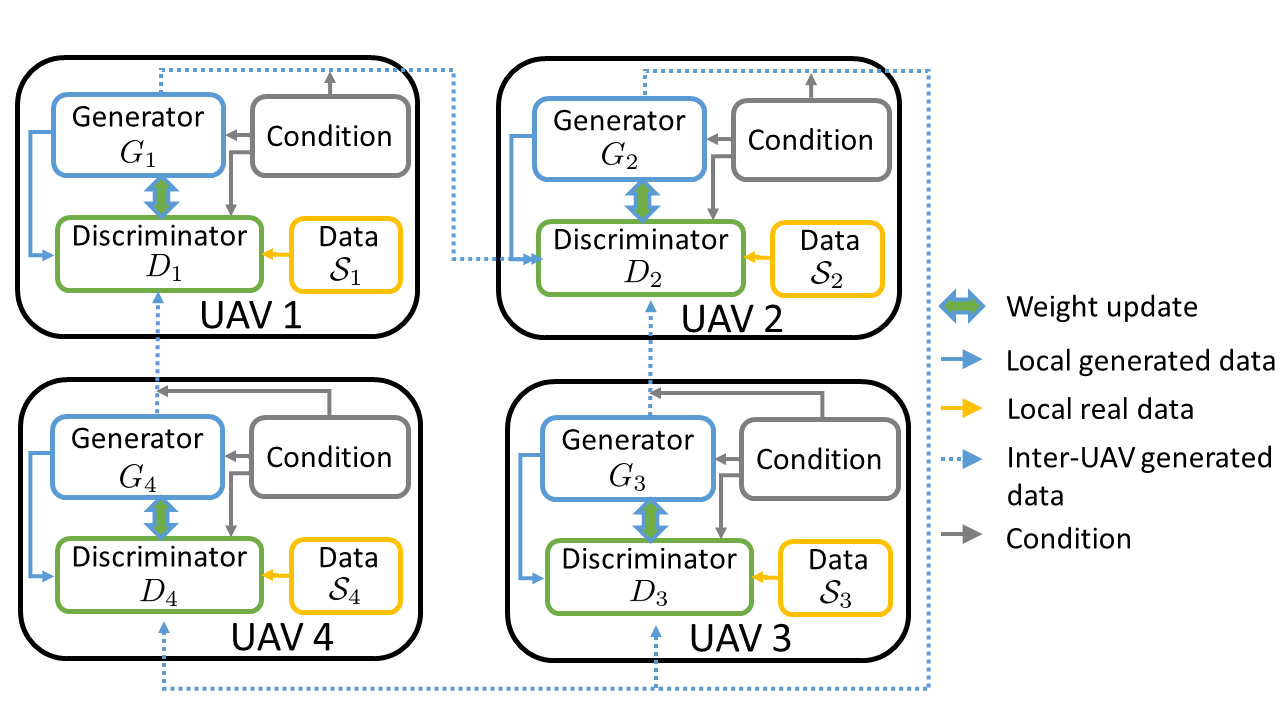}
		\vspace{-0.2cm}
		\caption{\label{learningFramework} An illustration of the distributed CGANs   framework with four UAVs, where $\mathcal{O}_1 = \{2\}$, $\mathcal{O}_2 = \{3,4\}$, $\mathcal{O}_3 = \{2\}$, and $\mathcal{O}_4 =\{1\}$. }  
	\end{center}\vspace{-1.2cm}  
\end{figure}

Given a set $\mathcal{I}$ of $I$ UAVs, we consider that the available data in $\mathcal{S}= \mathcal{S}_1 \cup \cdots \cup \mathcal{S}_I$  follow a distribution $f$. 
The local dataset $\mathcal{S}_i$ of  each UAV $i$  is collected from different geographic areas or at different service times. 
Hence, the  distribution $f_i$  of each local dataset $\mathcal{S}_i$  does not span the entire  spatial-temporal space of the real channel distribution. 
The goal is to train the generator distribution $f^G_{i}$ of each UAV $i$  to find the network-wide channel distribution $f$, under the constraint that no UAV $i$ sends its real dataset $\mathcal{S}_i$. 
To achieve this goal, we extend the newly introduced concept of \emph{distributed brainstorming GANs} in \cite{ferdowsi2020brainstorming} to the context of the studied wireless channel modeling problem with learning conditions. 
Thus, in our framework, each UAV $i$ only shares the generated samples (not the raw data) from $G_i$ and the AoA-AoD conditions with other UAVs in each training epoch.  
Fig. \ref{learningFramework} illustrated the proposed distributed CGAN framework, where the input of the discriminator for each UAV $i$ comes from its local dataset $\mathcal{S}_i$, and the generated samples from its own generator $G_i$ and the generators of  its neighboring UAVs. 
In this distributed CGAN framework, the generators collaboratively generate channel samples  to fool all of the discriminators while the discriminators try to distinguish between the generated and real channel samples.  
Let $\mathcal{N}_i$ be the set of  UAVs from whom UAV $i$ receives generated samples, and
let $\mathcal{O}_i$ be the UAV set to whom UAV $i$ sends its generated channel samples. 
Then, for each UAV $i$, the interaction between its generator and discriminator can be modeled by a game-theoretic framework with a value function:
\begin{equation} \label{indU}
\begin{aligned}
V_i(D_i, G_i, \{ G_j \}_{j \in \mathcal{N}_i}) = \frac{1}{K} \sum_{k=1}^K \mathbb{E}_{\boldsymbol{s}\sim f_i^b }  \Big[\log D_i(\boldsymbol{s}|\boldsymbol{\phi}_k)\Big] +  \mathbb{E}_{\boldsymbol{z} \sim f^z_i} \Big[\log(1-D_i(G_i(\boldsymbol{z}|\boldsymbol{\phi}_k)))\Big],
\end{aligned}
\end{equation}
where $f^b_i = \pi_i f_i +  \sum_{j \in \mathcal{N}_i} \pi_{ij}f^G_j$ is a mixture distribution of UAV $i$'s  local dataset $\mathcal{S}_i$   and received data from all neighboring UAVs in $\mathcal{N}_i$.  
Here, we define $\pi_i =  \frac{S_i}{S_i + \eta \sum_{j \in \mathcal{N}_i}S_j}$ and  $\pi_{ij}= \frac{\eta S_j}{S_i + \eta \sum_{j \in \mathcal{N}_i}S_j}$, where $ \eta {S}_j$ is the number of generated channel  samples that  UAV $j$  sends to UAV $i$ in each epoch, and $\eta >0$.  
Analogous to \cite{ferdowsi2020brainstorming}, given that the value functions of all UAVs are interdependent, we define the total utility function for the  distributed UAV  network  as follows:
\begin{equation}\label{Utility} \vspace{-0.1cm}
	V( \{D_i\}_{i=1}^{I},\{G_i\}^I_{i=1}) = \sum_{i=1}^I V_i(D_i, G_i, \{ G_j \}_{j \in \mathcal{N}_i}), 
\end{equation}
where all generators aim at minimizing the total utility function defined in (\ref{Utility}), while all discriminators try to maximize this value. 
Therefore,  the optimal discriminators and generators of the distributed CGAN learning can be derived as a  min-max problem as follows:
\begin{equation}\label{optimalDG} \vspace{-0.1cm}
	\{D^{*}_i\}_{i=1}^{I},\{G^{*}_i\}^I_{i=1} = \arg \min_{G_1,\cdots,G_I} \arg \max_{D_1,\cdots,D_I} V.  
\end{equation} 
Note that,  the optimal discriminators and generators in (\ref{optimalDG}) both depend on the structure of the UAV communication system. 
However,  the prior art in \cite{ferdowsi2020brainstorming} and \cite{ferdowsi2019generative} only defined the general  framework of distributed GANs, but it did not account for the presence of data exchange or the wireless networking optimization for  information sharing. 
Therefore, next, we will first define  the structure  of the UAV communication system. Then, based on the optimized network topology,  we identify the optimal solution to the learning problem in (\ref{optimalDG}).

\subsection{Distributed UAV Communication Network}

The communication structure of the UAV network is captured by a directed graph $\mathcal{G} = (\mathcal{I},\mathcal{E})$, where $\mathcal{I}$ is the set of  UAVs, and $\mathcal{E}$ is the set of edges. 
Each edge $e_{ij} \in \mathcal{E}$ is an ordered UAV pair that corresponds to an air-to-air (A2A) communication link. 
For example, for any $i,j\in\mathcal{I}$, if $ e_{ij}\in \mathcal{E}$, then in each CGAN learning iteration, UAV $i$ will send its generated samples to the discriminator of UAV $j$.   
The number of neighboring UAVs from whom UAV $i$ receives generated samples is called the in-degree, where  $N_i = |\mathcal{N}_i|$, and   the out-degree of UAV $i$ is $O_i = |\mathcal{O}_i|$.  
Meanwhile, for any $u,v \in\mathcal{I}$, if we can start from $u$, follow a set of connected  non-repeated edges in  $\mathcal{E}$, and finally reach $v$, then we say that a \emph{path} ${E}_{u,v}$ exists from $u$ to $v$, and the length of the path $l_{u,v}$ equals to the number of edges on ${E}_{u,v}$. 
Moreover,   we denote  the loop path that starts and ends both at $u$ as ${E}_{u}$, and denote the loop-path length as $l_u$. 

In order to efficiently share the generated channel samples, 
orthogonal frequency-division multiple access (OFDMA) techniques with $B$  available resource blocks (RBs) are used to support the A2A communication  over sub 6-GHz frequencies \cite{feng2018spectrum}, where $B \ge I$.  
In order to avoid  interference, we require  the number of communication links not to exceed the number of RBs, i.e., $|\mathcal{E}|\le B$, which is reasonable for UAV networks.      
Meanwhile, assuming a fixed hovering location for each UAV during the learning stage, the A2A communication rate from UAV $i$ to  $j$ using RB $b$ is given as   
$ R_{ij} = w_b \log_2 \left(1+ \frac{P_{ij}h_{ij}}{\sigma^2} \right)$, 
where $w_b$ is the A2A communication bandwidth, $P_{ij}$ and $h_{ij}$ are  the transmit power and path loss from UAV $i$ to  $j$,  and $\sigma^2$ is the noise power. 
A signal-to-noise ratio (SNR) threshold $\tau$ is introduced, such that for any UAV pair $(i,j)$, if the received  SNR at UAV $j$ is lower than $\tau$, i.e. $P_{ij}h_{ij}/\sigma^2 < \tau$, then, no RB will be assigned to this A2A communication link, i.e., $e_{ij} \notin \mathcal{E}$.  
In each iteration,  each UAV $i$ sends $\eta {S}_i$ generated samples to its neighboring UAVs in $\mathcal{O}_i$,   
and the transmission time for  the generated channel samples  should not exceed  $t_\tau$. 

Next, we define the \emph{completion time} $C$ of the distributed CGAN approach as  the expected number of iterations to finish the learning process, multiplied by the duration of each learning iteration \cite{ma2017distributed}.  
Here, the criterion of  learning completion is $ f^G_i = f$, $\forall i$, i.e., the generator $G_i$ of each UAV $i$ learns the entire channel distribution $f$ of the distributed network.     
To facilitate the analysis, we consider  a fixed size for each UAV’s dataset, i.e. $S_1 = \cdots = S_I = S$, and a homogeneous UAV communication network, where $N_1= \cdots = N_I = N$, so as to guarantee a synchronous learning speed.     
Meanwhile, we define  $\epsilon \in [0,1)$ to be the training error of the local discriminator at each UAV, where    
the value of $\epsilon$ represents the percentage  that a generated sample from $G_i$ does not follow the distribution of real data, but the discriminator $D_i$ does not distinguish this failure. 
Then, the probability that the learning process completes after $T$ iterations  can be derived as follows.     
\begin{theorem}\label{prob}  
	 Given the UAV network structure $\mathcal{G}$, the probability $p_{\mathcal{G}}(T)$ that the generator  $G_i$ of each UAV $i$ learns the entire channel distribution $f$ after  $T$ iterations will be given by: 
	 \begin{equation*}
	 p_{\mathcal{G}}(T)  = 
	 \begin{cases} 
	 	0  & 0< T < l^{\textrm{\normalfont max}}, \\
	 	\frac{[(1-\epsilon)\eta]^{l^{\textrm{\normalfont max}}}}{(1+N\eta)^{l^{\textrm{\normalfont max}}-1}}  & T = l^{\textrm{\normalfont max}},  \\
		p_{\mathcal{G}}(l^{\textrm{\normalfont max}})  +   \sum_{i=l^{\textrm{\normalfont max}}+1}^T \left[  \prod_{j=l^{\textrm{\normalfont max}}}^{i-1}\left( 1- \frac{ [(1-\epsilon)\eta]^{l^{\textrm{\normalfont max}}}}{(1+N\eta)^{j-1}}  \right)\right]  \frac{  [(1-\epsilon)\eta]^{l^{\textrm{\normalfont max}}}}{(1+N\eta)^{i-1}}   & l^{\textrm{\normalfont max}} < T < l^{\textrm{\normalfont max}} + l^{\textrm{\normalfont min}}_{\textrm{\normalfont loop}},
	 \end{cases}	 
	 \end{equation*}
 and for $ T \ge l^{\textrm{\normalfont max}} + l^{\textrm{\normalfont min}}_{\textrm{\normalfont loop}}$,   $	p_{\mathcal{G}}(T) =  p_{\mathcal{G}}(l^{\textrm{\normalfont max}}+l^{\textrm{\normalfont min}}_{\textrm{\normalfont loop}}-1)  + $
 \begin{equation*}
 	\begin{aligned}
 	\sum_{i=l^{\textrm{\normalfont max}}+l^{\textrm{\normalfont min}}_{\textrm{\normalfont loop}}}^T \left[  \prod_{j=l^{\textrm{\normalfont max}}+l^{\textrm{\normalfont min}}_{\textrm{\normalfont loop}}-1}^{i-1}\left( 1- \frac{  [(1-\epsilon)\eta]^{l^{\textrm{\normalfont max}}}}{(1+N\eta)^{j-1}} \prod_{k=l^{\textrm{\normalfont max}} + l^{\textrm{\normalfont min}}_{\textrm{\normalfont loop}} -1} ^j \gamma(k) \right)\right]  \frac{ [(1-\epsilon)\eta]^{l^{\textrm{\normalfont max}}}}{(1+N\eta)^{i-1}} \prod_{l=l^{\textrm{\normalfont max}} + l^{\textrm{\normalfont min}}_{\textrm{\normalfont loop}} } ^i \gamma(l), 
 	\end{aligned} 
 \end{equation*}
where $l^{\textrm{\normalfont max}} = \max_{u,v \in \mathcal{I}} l_{u,v}$ is the length of the maximum shortest-path  in $\mathcal{G}$, $l^{\textrm{\normalfont min}}_{\textrm{\normalfont loop}} $ is the length of the shortest loop-path with the same starting UAV as $l^{\textrm{\normalfont max}}$, and $\gamma(T) \ge 1$ is an acceleration coefficient. 
\end{theorem}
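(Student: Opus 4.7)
The plan is to model the distributed CGAN learning as an information-propagation process on the directed graph $\mathcal{G}$, in which every iteration gives each UAV a single opportunity to pass its current knowledge of the channel distribution to its out-neighbors through the shared generated samples, and then to assemble the claimed piecewise expression by a case analysis on $T$ relative to the critical graph lengths $l^{\max}$ and $l^{\min}_{\textrm{loop}}$. The core probabilistic primitive throughout the proof is a one-hop success probability proportional to $(1-\epsilon)\eta/(1+N\eta)$, which follows from the mixture weights $\pi_i=1/(1+N\eta)$ and $\pi_{ij}=\eta/(1+N\eta)$ in $f^b_i$ under the homogeneity assumptions $S_1=\cdots=S_I=S$ and $N_i=N$, combined with the event that each local discriminator correctly accepts the propagated mass, which by definition of the training error occurs with probability $1-\epsilon$.

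For the first regime $0<T<l^{\max}$, I would simply argue a hop-counting impossibility: since the longest shortest path in $\mathcal{G}$ has length $l^{\max}$, the distribution originating at the source UAV $u^\ast$ at one end of this path cannot appear in the mixture $f^b_{v^\ast}$ at the other end in fewer than $l^{\max}$ iterations, so the generator $G_{v^\ast}$ cannot yet match $f$ and therefore $p_{\mathcal{G}}(T)=0$.

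For the boundary case $T=l^{\max}$, I would compute the probability that $f$ has successfully propagated along every one of the $l^{\max}$ hops of the bottleneck path in exactly $l^{\max}$ iterations. Multiplying $l^{\max}$ independent per-hop successes each of probability $(1-\epsilon)\eta/(1+N\eta)$ would formally yield $[(1-\epsilon)\eta]^{l^{\max}}/(1+N\eta)^{l^{\max}}$; however, the source UAV already holds its own component of $f$ prior to the first iteration, so the initial hop does not compete with a preexisting local contribution on the receiving side and only supplies an $(1-\epsilon)\eta$ factor without the accompanying $(1+N\eta)$ normalization. This adjustment produces the stated expression $[(1-\epsilon)\eta]^{l^{\max}}/(1+N\eta)^{l^{\max}-1}$.

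For $l^{\max}<T<l^{\max}+l^{\min}_{\textrm{loop}}$, no loop rooted at $u^\ast$ has closed yet, so every additional iteration constitutes a fresh retry along the bottleneck path. Conditioning on failure in each previous iteration and observing that the mixture $f^b_i$ at every intermediate UAV accumulates further generated components over time would lead to a per-iteration success probability that decays as $1/(1+N\eta)^{j-1}$, and a geometric-style telescoping over $i=l^{\max}+1,\dots,T$ would produce the second branch. Finally, for $T\ge l^{\max}+l^{\min}_{\textrm{loop}}$, loop paths begin to return and reinforce the bottleneck with additional copies of $f_{u^\ast}$; I would encapsulate this redundancy gain in the acceleration coefficient $\gamma(T)\ge 1$, multiply the per-iteration success probability by the cumulative $\prod_k\gamma(k)$ contributions, and repeat the telescoping to obtain the last branch. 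The main obstacle will be this final step: one must precisely justify the independence of successive retry attempts despite the fact that the mixtures $f^b_i$ are updated from iteration to iteration through the shared generators, and one must derive the exact form of $\gamma(T)$ from the loop structure of $\mathcal{G}$, which is delicate because multiple returning loops of different lengths can partially overlap on edges of the bottleneck path.
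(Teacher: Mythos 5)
Your proposal follows essentially the same route as the paper's own proof in Appendix~\ref{appendicesA}: it tracks a unit of channel information propagating hop-by-hop along the bottleneck shortest path, with a per-hop factor $(1-\epsilon)\eta$ and a per-node dilution $1/(1+N\eta)$ coming from the mixture weights, then accumulates the per-iteration arrival probabilities by the complement/chain rule and absorbs the loop-induced reinforcement into the coefficient $\gamma(T)$. The independence of successive attempts and the exact form of $\gamma$ that you flag as obstacles are handled in the paper only by assumption, so your treatment matches the paper's level of rigor as well as its structure.
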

\begin{proof}
See Appendix \ref{appendicesA}. 	 
\end{proof}
Theorem \ref{prob} shows that the number of iterations need for learning completion is always greater than or equal to the maximum shortest-path length $l^{\textrm{max}}$ in $\mathcal{G}$. Meanwhile, the completion probability $p_{\mathcal{G}}(T) $ for each iteration $T$ will decrease as $ l^{\textrm{max}}$ becomes larger. 
Therefore,  to optimize the learning speed  in the UAV  network, it is necessary to minimize the maximum length of shortest paths among all UAVs. 
Next, based on Theorem \ref{prob}, the number of  iterations $T_\mathcal{G} \in \mathbb{N}^+$, which is required for the distributed CGAN learning to complete with a confidence level $p_\tau \in(0,1)$, is given by 
\begin{equation} \label{convergenceIteration}
p_{\mathcal{G}}(T_\mathcal{G}-1)< p_\tau \le    p_{\mathcal{G}}(T_\mathcal{G}). 
\end{equation}
Hence, after  $T_\mathcal{G}$  iterations,  the generator of each UAV is expected to learn the entire channel distribution with a probability  $p_\tau$.     
Meanwhile, we define  $t_{\epsilon}$ to be the upper-bound time for each UAV  to achieve a training error of its discriminator that is no larger than $\epsilon$. 
Then, given the network structure $\mathcal{G}$, the overall expected time to complete the distributed CGAN learning is\cite{ma2017distributed}   
\begin{equation}\label{convergenceTime}
C(\mathcal{G}) = (t_\tau  + t_\epsilon) \cdot T_\mathcal{G}. 
\end{equation}
Consequently, in the distributed UAV network with limited communication resources, the objective for the cooperative mmWave channel modeling is to form an optimal A2A communication network $\mathcal{G}$, such that the expected completion time of the distributed CGAN learning is minimized, i.e., 
\begin{subequations}\label{equsOpt}  \vspace{-0.5cm}
	\begin{align}
	\min_{\mathcal{G}} \quad &  C(\mathcal{G})  \label{equOpt}\\ 
	\textrm{s.t.} \quad  
	& \sum_{e_{ij} \in \mathcal{E}} P_{ij} \le P_\textrm{max},  \quad \forall i \in \mathcal{I},  \label{consPower}\\ 
	& P_{ij}h_{ij}/\sigma^2 \ge \tau, \quad \forall e_{ij} \in \mathcal{E}, \label{consSNR}\\
	&  \eta S_i \rho /R_{ij}  \le t_\tau,  \quad \forall e_{ij} \in \mathcal{E},  \label{consTime} \\
	& \exists E_{i,j}  \subset \mathcal{E}  , \quad \forall i,j \in \mathcal{I}, \label{consPath} \\
	& I \le|\mathcal{E}| \le B,  \label{consEdge} 	
	\end{align}
\end{subequations}
where $P_\textrm{max}$ is the maximum transmit power  for A2A communications, and $\rho$ is the data size for each channel sample.  
Here, (\ref{consPower}) limits the maximum transmit power  $P_\textrm{max}$  for each UAV,   (\ref{consSNR}) and (\ref{consTime}) set thresholds for the received SNR and the transmission time of each A2A communication link,  (\ref{consPath}) requires a strongly connected network in $\mathcal{G}$ such that each local channel dataset can be learned by all the other UAVs via the distributed CGAN framework, and (\ref{consEdge}) avoids the interference over A2A communication links.   
Here, it is worthy noting that our goal is not to find the shortest paths in the graph $\mathcal{G}$, but to identify the optimal topology $\mathcal{G}^{*}$, such that  $\mathcal{G}^{*}$ yields a minimal value, among all possible topologies, of  the maximum shortest-path between any two UAVs. 
However, in order to solve (\ref{equsOpt}),  a central controller is required to optimize the  communication structure based on path loss  between each UAV pair. 
However, in a distributed UAV network,  such a centralized entity is often not available, which makes (\ref{equsOpt}) very challenging to solve. 
To solve this problem in a distributed manner,  we must equivalently disassemble  (\ref{equsOpt}) into a set of sub-problems for each UAV, which will be detailed in the following section.
Once (\ref{equsOpt}) is solved, we subsequently derive the NE for the distributed  learning in (\ref{optimalDG}).

\section{Optimal learning for distributed CGANs } \label{solution}

In this section, we aim to jointly solve the optimization problems in (\ref{optimalDG}) and (\ref{equsOpt}) to enable an efficient  channel modeling approach using the distributed CGAN framework. 
First, we  derive  the optimal structure $\mathcal{G}^{*}$ for the UAV communication network that minimizes the expected iterations for learning completion  in (\ref{equsOpt}). 
Next, 
given the UAV network topology  $\mathcal{G}^*$,  we analytically derive the optimal distributed CGAN solution $(G_i^*,D_i^*)$ for each UAV $i$. 

\subsection{Optimal network structure for A2A UAV communications } \label{optSec1} 
In order to optimally solve (\ref{equsOpt}) in a distributed manner without a central controller, we first consider a simple scenario, where  constraint (\ref{consEdge}) is simplified as 
\begin{equation}\label{conEdgesimp}
	I = |\mathcal{E}|  \le B,
\end{equation} 
i.e., the number of A2A communication links equals to  the number of UAVs. 
Then, based on  constraints (\ref{consPath})  and  (\ref{conEdgesimp}), we derive the property of the UAV network structure as follows.  
\begin{theorem}\label{gStructure}
	Under  constraint (\ref{conEdgesimp}), 
	the strongly connected network must have a ring structure, i.e.,   $N_i = O_i = 1$, $\mathcal{N}_i \cap \mathcal{N}_j = \emptyset$, and $\mathcal{O}_i \cap \mathcal{O}_j = \emptyset$,   $\forall i,j \in \mathcal{I}$ and $i \ne j$.
\end{theorem}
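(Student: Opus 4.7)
The plan is to argue by a degree-counting argument combined with a structural consequence of strong connectivity. My strategy would be to first force each in-degree and out-degree to equal one by balancing the total edge count, and then to upgrade this local statement into the global ring shape.

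First, I would invoke strong connectivity from constraint (\ref{consPath}): every UAV must be reachable from every other, so each UAV $i$ needs at least one outgoing edge and at least one incoming edge, giving $O_i \ge 1$ and $N_i \ge 1$ for all $i \in \mathcal{I}$. Next, I would use the standard directed-graph identity $\sum_{i \in \mathcal{I}} O_i = \sum_{i \in \mathcal{I}} N_i = |\mathcal{E}|$. Under the simplified constraint (\ref{conEdgesimp}) with $|\mathcal{E}| = I$, the two lower bounds $O_i \ge 1$ and $N_i \ge 1$ together with the sum equalling $I$ force equality everywhere: $O_i = N_i = 1$ for every UAV $i$. This already implies the disjointness claims, since each UAV has a single predecessor and a single successor: if $\mathcal{O}_i \cap \mathcal{O}_j \neq \emptyset$ with $i\neq j$ then the common target would have in-degree at least $2$, contradicting $N \equiv 1$, and symmetrically for $\mathcal{N}_i \cap \mathcal{N}_j$.

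The part I expect to be the main obstacle is verifying that the resulting graph is actually a \emph{single} directed cycle that contains every UAV, rather than a disjoint union of smaller cycles (which would also have $N_i = O_i = 1$ everywhere). To handle this, I would fix any UAV $v$ and follow its unique outgoing edge repeatedly, producing a sequence $v \to v_1 \to v_2 \to \cdots$. Since the vertex set is finite and each step is deterministic, the sequence must revisit some vertex; I would argue the first revisited vertex must be $v$ itself, because any earlier vertex revisited would acquire in-degree at least $2$. Hence the orbit of $v$ forms a directed cycle $\mathcal{C}$. Now I would use strong connectivity to finish: any UAV $u \notin \mathcal{C}$ would need a directed path from some vertex of $\mathcal{C}$ to $u$, but every vertex of $\mathcal{C}$ has its unique outgoing edge already committed to the next vertex on $\mathcal{C}$, so no edge escapes $\mathcal{C}$. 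This contradicts reachability of $u$, so $\mathcal{C} = \mathcal{I}$, yielding the ring structure.

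Finally, I would note that the ring structure makes the disjointness conditions $\mathcal{N}_i\cap\mathcal{N}_j = \emptyset$ and $\mathcal{O}_i\cap\mathcal{O}_j = \emptyset$ for $i\neq j$ immediate, closing the proof. The whole argument is essentially combinatorial and does not require any facts about the CGAN learning itself beyond the graph-theoretic constraints (\ref{consPath}) and (\ref{conEdgesimp}), so I would keep it short and elementary.
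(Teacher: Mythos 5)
Your proof is correct and follows essentially the same route as the paper's: lower-bound each in- and out-degree by strong connectivity, use the degree-sum identity together with $|\mathcal{E}|=I$ to force $N_i=O_i=1$ for all $i$, and then conclude the ring structure. In fact your argument is slightly more complete than the paper's, because you explicitly rule out a disjoint union of smaller directed cycles (which also has $N_i=O_i=1$ everywhere) by tracing the unique out-edges from a vertex and observing that no edge can escape the resulting cycle, whereas the paper passes over this case with the bare assertion that strong connectivity forces a single ring.
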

\begin{proof}
	See Appendix \ref{appendicesB}.
\end{proof}
Theorem \ref{gStructure} shows that, given constraints (\ref{consPath})  and (\ref{conEdgesimp}), the network structure of the UAV communication system must be a ring, where each UAV receives the channel sample from one UAV, and sends its generated data to another UAV.  
Based on Theorems \ref{prob} and  \ref{gStructure},  we can equivalently reformulate (\ref{equsOpt}) into a set of distributed optimization problems, such that  the objective of each UAV $i$ is to choose the optimal single UAV $ \mathcal{O}_i =  \{ o_i \}$ to whom UAV $i$ sends its generated channel samples, so as to minimize the learning completion time over its maximum shortest-path while satisfying constraints (\ref{consPower})-(\ref{consTime}), i.e.,  
\begin{subequations}\label{equsOpt2}  
	\begin{align}
	\min_{o_i \in \mathcal{I}_{-i}} \quad &  l_i^{\textrm{max}} (\mathcal{G} + e_{i,o_i})  \label{equOpt2}\\ 
	\textrm{s. t.} \quad  
	& P_{i,o_i} \le P_\textrm{max},    \label{consPower2}\\ 
	& P_{i,o_i}h_{i,o_i}/\sigma^2 \ge \tau,   \label{consSNR2}\\
	&  \eta S_i \rho /R_{i,o_i} \le t_\tau,  \label{consTime2}  	
	\end{align}
\end{subequations}
where $\mathcal{I}_{-i}$ is the set of UAVs except for  $i$, $\mathcal{G} + e_{i,o_i}$ is the  graph structure generated by adding an edge $e_{i,o_i}$ to $\mathcal{G}$, and $l_i^{\textrm{max}} $ is the  maximum shortest-path  from UAV $i$ to any other UAVs. 
We define the set of feasible UAVs  to whom  UAV $i$ can send its generated channel samples  while satisfying constraints (\ref{consPower2})-(\ref{consTime2})  as 
\begin{equation}\label{feasiableSet}
\mathcal{J}_i = \{ j\in \mathcal{I}_{-i} | P_{ij} \le P_\textrm{max},   P_{ij} h_{ij}/\sigma^2 \ge \tau, \eta S_i \rho / R_{ij} \le t_\tau  \}. 
\end{equation}
Then, the necessary condition for a feasible solution to (\ref{equsOpt2}) is  provided next.  
\begin{proposition}[Necessary condition]\label{existence}
	 Under constraint (\ref{conEdgesimp}), a feasible UAV network structure $\mathcal{G}^{'}$  exists,  only if  $\bigcup_{i=1}^{I} \mathcal{J}_i = \mathcal{I}$ and $\forall i,\mathcal{J}_i \ne \emptyset$ hold.  
\end{proposition}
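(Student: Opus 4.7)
The statement is a necessary condition, so the plan is to assume a feasible network graph $\mathcal{G}^{'}$ exists under constraint (\ref{conEdgesimp}) and deduce the two set-theoretic conclusions from the structural properties of $\mathcal{G}^{'}$. The approach relies almost entirely on two ingredients that are already in hand: Theorem \ref{gStructure}, which forces $\mathcal{G}^{'}$ to be a directed ring, and the definition of $\mathcal{J}_i$ in \eqref{feasiableSet}, which encodes the per-link feasibility constraints (\ref{consPower2})--(\ref{consTime2}).

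First, I would invoke Theorem \ref{gStructure} to conclude that $\mathcal{G}^{'}$ must be a ring, so that every UAV $i$ has out-degree $O_i = 1$ and in-degree $N_i = 1$, with the unique successor $o_i \in \mathcal{I}_{-i}$ defined by $\mathcal{O}_i = \{o_i\}$. Because $e_{i,o_i}\in \mathcal{E}$ belongs to a feasible graph, it must simultaneously meet the transmit-power cap \eqref{consPower2}, the received-SNR threshold \eqref{consSNR2}, and the transmission-time bound \eqref{consTime2}; comparing with \eqref{feasiableSet} immediately yields $o_i \in \mathcal{J}_i$, so $\mathcal{J}_i \neq \emptyset$ for every $i \in \mathcal{I}$. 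This disposes of the second condition.

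Second, I would establish the union property $\bigcup_{i=1}^{I}\mathcal{J}_i = \mathcal{I}$. The inclusion $\bigcup_{i=1}^{I}\mathcal{J}_i \subseteq \mathcal{I}$ is trivial since $\mathcal{J}_i \subseteq \mathcal{I}_{-i} \subseteq \mathcal{I}$. For the reverse inclusion, I would use the in-degree-one property of the ring: for every UAV $j \in \mathcal{I}$ there exists a unique predecessor $i(j) \in \mathcal{I}_{-j}$ with $o_{i(j)} = j$, and by the argument of the previous paragraph $j = o_{i(j)} \in \mathcal{J}_{i(j)} \subseteq \bigcup_{k}\mathcal{J}_k$. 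Letting $j$ range over all UAVs gives $\mathcal{I} \subseteq \bigcup_{i=1}^{I}\mathcal{J}_i$, which combined with the trivial containment closes the proof.

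I do not anticipate a hard step here; the proposition is essentially a direct consequence of Theorem \ref{gStructure} together with the feasibility bookkeeping in \eqref{feasiableSet}. The only point that deserves a line of justification is that the in-degree-one property (not merely the strong connectivity of $\mathcal{G}^{'}$) is what guarantees that \emph{every} node of $\mathcal{I}$ appears as the successor of some other node, and hence that the union exhausts $\mathcal{I}$ rather than only covering a proper subset. This is why constraint (\ref{conEdgesimp}), which is what triggers Theorem \ref{gStructure}, is indispensable in the statement of the proposition.
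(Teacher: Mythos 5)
Your proof is correct and is essentially the contrapositive of the paper's argument: the paper assumes one of the two conditions fails and derives $O_i = 0$ or $N_i = 0$, contradicting strong connectivity, while you argue directly that each node's outgoing edge witnesses $\mathcal{J}_i \ne \emptyset$ and each node's incoming edge witnesses its membership in the union. The only cosmetic difference is that you route through the full ring structure of Theorem \ref{gStructure}, whereas the paper uses only the weaker fact that strong connectivity already forces $O_i \ge 1$ and $N_i \ge 1$.
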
 
\begin{proof}
See Appendix \ref{appendicesC}. 
\end{proof} 
Proposition \ref{existence} shows that if the union of feasible sets does not cover all UAVs,  the UAV network cannot form a strongly connected graph, and then, a feasible solution to   (\ref{equsOpt2}) does not exist.      
Based on Proposition \ref{existence} and Theorem \ref{gStructure}, 
we derive the sufficient condition for the optimal network structure  as follows.  
\begin{proposition}[Sufficient condition]\label{optimalResult}
	Under constraint (\ref{conEdgesimp}), given that $\bigcup_{i=1}^{I} \mathcal{J}_i = \mathcal{I}$ and $\mathcal{J}_i \ne \emptyset$ hold for all $i\in\mathcal{I}$,  the optimal UAV network structure  is  $\mathcal{G}^{'} = (\mathcal{I},\mathcal{E})$, where $\mathcal{E} \subseteq \{ e_{ij}| i\in \mathcal{I}, j\in \mathcal{J}_i  \}$ and $l_i^{\textrm{max}}(\mathcal{G}^{'}) = I-1$, $\forall i \in \mathcal{I}$.  
\end{proposition}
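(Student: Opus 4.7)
The plan is to combine the ring-structure conclusion of Theorem \ref{gStructure} with an elementary shortest-path calculation, and then certify feasibility using the two hypotheses. First, by Theorem \ref{gStructure}, under the simplified edge count (\ref{conEdgesimp}) every strongly connected topology $\mathcal{G}'$ must be a directed ring on the $I$ UAVs. In such a ring, starting from any vertex $i$, the farthest reachable vertex along the ring direction is the immediate predecessor of $i$, which requires exactly $I-1$ hops; every other vertex is reached in fewer hops. Hence $l_i^{\max}(\mathcal{G}') = I-1$ for all $i$, uniformly over the choice of ring. By Theorem \ref{prob}, the completion probability $p_{\mathcal{G}}(T)$ is monotone decreasing in $l^{\max}$, and $l^{\max}=I-1$ is the smallest value achievable by any strongly connected topology with $|\mathcal{E}|=I$; therefore every ring simultaneously minimizes $C(\mathcal{G})$ in (\ref{equOpt}).

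Next, I would verify that at least one ring sits entirely inside the feasibility digraph $\{e_{ij}:\ j\in \mathcal{J}_i\}$. Under the hypotheses $\bigcup_{i} \mathcal{J}_i = \mathcal{I}$ and $\mathcal{J}_i\neq\emptyset$, each UAV $i$ may select some $o_i\in \mathcal{J}_i$. The coverage condition allows the selection to be made so that every UAV $j\in \mathcal{I}$ appears exactly once as the target of some edge, producing a bijection $i\mapsto o_i$ on $\mathcal{I}$. A bijection decomposes into disjoint directed cycles; if that decomposition is not a single Hamiltonian cycle, I merge two cycles $C_1$ and $C_2$ by locating some $i\in C_1$ and $j\in C_2$ with $j\in \mathcal{J}_i$, swapping the outgoing edge of $i$ in $C_1$ with a corresponding edge incident to $j$ in $C_2$ (and a symmetric replacement to preserve the bijection property). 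Iterating this merge collapses the decomposition into a single ring on $I$ vertices. Because every retained edge already satisfies (\ref{consPower2})--(\ref{consTime2}) by definition of $\mathcal{J}_i$, the resulting $\mathcal{G}'$ automatically respects all constraints in (\ref{equsOpt}).

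The main obstacle will be this cycle-merging step, i.e., showing that whenever the current bijection produces more than one cycle, a feasible cross-cycle bridge is guaranteed to exist. The necessary direction in Proposition \ref{existence} already makes the two hypotheses indispensable; the delicate point is sufficiency. The coverage hypothesis $\bigcup_i \mathcal{J}_i = \mathcal{I}$ should supply the bridging edge in each iteration, since it forces every UAV to appear as a feasible out-neighbor of at least one transmitter and thereby prevents the feasibility digraph from splitting into closed subsets of $\mathcal{I}$ that cannot be bridged. Once existence of such a ring is secured, the equality $l_i^{\max}(\mathcal{G}') = I-1$ and the inclusion $\mathcal{E}\subseteq\{e_{ij}:\ j\in\mathcal{J}_i\}$ follow immediately, completing the characterization of the optimal structure.
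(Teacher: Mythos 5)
The first half of your argument --- invoking Theorem \ref{gStructure} to force a ring under (\ref{conEdgesimp}), computing $l_i^{\textrm{max}}(\mathcal{G}') = I-1$ along the ring, and concluding from Theorem \ref{prob} that all rings yield the same (and hence the minimal) completion time --- is essentially the paper's own proof. The paper additionally notes explicitly that any edge drawn from $\{e_{ij} : j \in \mathcal{J}_i\}$ automatically satisfies the power, SNR and delay constraints (\ref{consPower2})--(\ref{consTime2}) by the definition of $\mathcal{J}_i$ in (\ref{feasiableSet}); you should state this too, but it is immediate.

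The genuine gap is in your second half, the existence of a feasible ring. This step is not attempted in the paper: the paper's proof only verifies that a topology with $\mathcal{E} \subseteq \{e_{ij} \mid j \in \mathcal{J}_i\}$ and $l_i^{\textrm{max}} = I-1$ is feasible and optimal, i.e., it characterizes the optimum conditional on such a structure existing. Your existence argument does not go through, and in fact the claim is false under the stated hypotheses. First, $\bigcup_{i}\mathcal{J}_i = \mathcal{I}$ together with $\mathcal{J}_i \ne \emptyset$ does not guarantee a bijective selection $i \mapsto o_i$ with $o_i \in \mathcal{J}_i$: that is a system of distinct representatives, which by Hall's theorem requires $\bigl|\bigcup_{i \in S}\mathcal{J}_i\bigr| \ge |S|$ for \emph{every} subset $S \subseteq \mathcal{I}$, a strictly stronger condition than coverage of the whole set. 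Second, even when a permutation exists, the cycle-merging step needs cross-cycle feasible edges that need not exist. A concrete counterexample: $I=3$ with $\mathcal{J}_1 = \{2\}$, $\mathcal{J}_2 = \{1,3\}$, $\mathcal{J}_3 = \{2\}$. Both hypotheses hold, yet UAVs $1$ and $3$ are each forced to transmit to UAV $2$, so no ring (indeed no strongly connected topology with $|\mathcal{E}| = I$) lies inside the feasibility digraph. The two hypotheses are, as Proposition \ref{existence} establishes, necessary for feasibility, but they are not sufficient to produce a ring; your proof should drop the existence claim and, like the paper, restrict itself to showing that any ring built from the feasible sets is feasible for (\ref{equsOpt}) and (\ref{equsOpt2}) and that all such rings are simultaneously optimal because $l^{\textrm{max}}$ is constant over them.
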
 
\begin{proof}
See Appendix \ref{appendicesD}. 
\end{proof} 

Consequently, under constraints (\ref{consPower})-(\ref{consPath}) and (\ref{conEdgesimp}), the optimal UAV network   $\mathcal{G}^{'}$  that minimizes the expected  iterations for learning completion has a ring  structure with a  communication link set  $\mathcal{E} \subseteq \{ e_{ij}| i\in \mathcal{I}, j\in \mathcal{J}_i  \}$. 
The proof of Proposition \ref{optimalResult} in Appendix \ref{appendicesD} has shown that $\mathcal{G}^{'}$ is not only the optimal solution to the simplified problem (\ref{equsOpt2}), but also a feasible solution to our original problem (\ref{equsOpt}).  
Note that,  the solution in Proposition \ref{optimalResult}  requires that $O_i=1$, while the original constraint (\ref{consEdge}) allows the value of $O_i$ to be greater than one, i.e., the number of edges can be greater than or equal to the number of UAVs. 
Given that the total number of UAV communication edges is $|\mathcal{E}| = \sum_{i \in \mathcal{I}}O_i$,   constraint  (\ref{consEdge}) is equivalent to $\sum_{i \in \mathcal{I}}O_i \le B$.    
Next, we  derive the optimal solution for the UAV network structure given that $O_i \ge 1$. 

To find the optimal topology  $\mathcal{G}^{*}$ that solves (\ref{equsOpt}), we first start  with the feasible solution  $\mathcal{G}^{'}$, where
the strongly connected  property in (\ref{consPath})  is satisfied. Then,  more communication edges need to be added to $\mathcal{G}^{'}$ to reduce $C(\mathcal{G})$. 
Here, similarly to (\ref{feasiableSet}), we define another feasible set for each UAV $i$ that satisfies   (\ref{consPower}) - (\ref{consTime}) and (\ref{consEdge}) as 
\begin{equation}\label{feasiableSetNew}
	\hat{\mathcal{J}}_i =  \left\{ j\in \mathcal{J}_{i} \Bigg\rvert \sum_{j \in \hat{\mathcal{J}}^O_i } P_{ij} \le P_\textrm{max},   P_{ij} h_{ij}/\sigma^2 \ge \tau, \eta S_i \rho / R_{ij} \le t_\tau  \right\}, 
\end{equation}
where $\hat{\mathcal{J}}^O_i \subseteq \hat{\mathcal{J}}_i \subseteq \mathcal{J}_i$, and $\hat{\mathcal{J}}^O_i$ is a subset that contains any $O_i$ components of ${\mathcal{J}}_i$. 
Note that, when $O_i=1$, $\mathcal{J}_i = \hat{\mathcal{J}}_i = \hat{\mathcal{J}}^O_i $. 
Based on the  feasible set in (\ref{feasiableSetNew}), we derive the necessary conditions for the optimal solution to  (\ref{equsOpt})   as follows. 

\begin{corollary}[Necessary condition]\label{optFinalSolution}
	 The optimal network structure $\mathcal{G}^{*} = (\mathcal{I},\mathcal{E})$ that maximizes the probability of learning completion in (\ref{equsOpt}) requires that  $\mathcal{G}^{'} \subseteq \mathcal{G}^{*}$,  and $\mathcal{E} = \{ e_{ij}| \forall i\in \mathcal{I}, \forall j\in \hat{\mathcal{J}}^O_i  \}$.  
\end{corollary}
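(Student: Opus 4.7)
The plan is to exploit the monotonicity of the completion time $C(\mathcal{G})$ with respect to edge insertion, combined with the base ring structure characterized in Proposition \ref{optimalResult}. First, I would establish two monotonicity facts. Fact one: in any directed graph, inserting an edge can only weakly decrease every pairwise shortest-path length, hence can only weakly decrease $l^{\textrm{max}}$, because every pre-existing path remains valid while new shorter paths may appear. Fact two: inspecting the expressions for $p_{\mathcal{G}}(T)$ in Theorem \ref{prob}, the completion probability at any fixed iteration $T$ is a decreasing function of $l^{\textrm{max}}$, since the numerator factor $[(1-\epsilon)\eta]^{l^{\textrm{max}}}$ shrinks with $l^{\textrm{max}}$ (under the natural regime $(1-\epsilon)\eta<1+N\eta$), and the denominators $(1+N\eta)^{j-1}$ scale unfavorably as the longest path length grows. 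Chaining these two facts through (\ref{convergenceIteration}) and (\ref{convergenceTime}), $C(\mathcal{G})$ becomes a non-increasing function of the edge set $\mathcal{E}$, provided each additional edge remains feasible under (\ref{consPower})--(\ref{consEdge}).

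With this monotonicity in hand, both necessary conditions will follow by contrapositive. For the edge-set condition, suppose there exist indices $i\in\mathcal{I}$ and $j\in\hat{\mathcal{J}}^O_i$ with $e_{ij}\notin\mathcal{E}$ in an optimal $\mathcal{G}^{*}$. By the very definition of $\hat{\mathcal{J}}^O_i$ in (\ref{feasiableSetNew}), adding the edge $e_{ij}$ to $\mathcal{G}^{*}$ preserves the power budget (\ref{consPower}), the SNR threshold (\ref{consSNR}), the transmission-time constraint (\ref{consTime}), and the RB-count constraint (\ref{consEdge}); by the monotonicity argument the augmented graph attains a weakly smaller $C(\mathcal{G})$, contradicting the optimality of $\mathcal{G}^{*}$. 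For the inclusion $\mathcal{G}'\subseteq \mathcal{G}^{*}$, the strong-connectivity constraint (\ref{consPath}) forces $\mathcal{G}^{*}$ to contain a spanning strongly connected subgraph; by Proposition \ref{optimalResult}, among all feasible single-out-neighbor topologies only a ring achieves the minimum $l_i^{\textrm{max}}=I-1$, so I can identify a Hamiltonian cycle inside $\mathcal{G}^{*}$ whose edges lie in $\bigcup_{i\in\mathcal{I}}\{e_{ij}:j\in\mathcal{J}_i\}$ and label it $\mathcal{G}'$.

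The main obstacle I anticipate is the rigor of the first inclusion, since the ring $\mathcal{G}'$ in Proposition \ref{optimalResult} is not uniquely determined. The correct reading of the corollary is therefore existential: $\mathcal{G}^{*}$ must contain \emph{some} feasible ring backbone, and that backbone is designated $\mathcal{G}'$. Establishing this existence cleanly requires a Hamiltonian-cycle extraction argument on $\mathcal{G}^{*}$ that simultaneously respects the per-UAV feasibility sets $\mathcal{J}_i$; the possibility of ties in $l^{\textrm{max}}$ and the degeneracy that the per-epoch duration $t_\tau$ in (\ref{convergenceTime}) is insensitive to certain edge additions will need to be handled by taking the non-strict form of the optimality statement. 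Once this step is secured, the remainder of the proof is a pair of direct monotonicity contrapositives and does not require intricate computation.
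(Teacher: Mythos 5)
Your route is genuinely different from the paper's: you attempt a full exchange argument (insert any missing feasible edge, show the objective weakly improves, contradict optimality), whereas the paper's own proof is a short direct verification that each clause of the corollary is forced by one of the constraints in (\ref{equsOpt}) --- containment of $\mathcal{G}'$ guarantees the strong-connectivity constraint (\ref{consPath}), membership $j\in\hat{\mathcal{J}}_i$ guarantees (\ref{consPower})--(\ref{consTime}) by the definition in (\ref{feasiableSetNew}), and using exactly $O_i$ out-edges saturates (\ref{consEdge}) while minimizing the maximum shortest path. Your stronger ambition, however, exposes a genuine gap. Your ``Fact two'' treats $p_{\mathcal{G}}(T)$ as a function of $l^{\textrm{max}}$ alone with $N$ held fixed. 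But Theorem \ref{prob} is stated for a homogeneous network with in-degree $N=|\mathcal{E}|/I$, so inserting edges simultaneously decreases $l^{\textrm{max}}$ \emph{and} increases $N$, and the $(1+N\eta)^{j-1}$ denominators grow with $N$. The net effect on $p_{\mathcal{G}}(T)$ is therefore not monotone by inspection; your chain ``more edges $\Rightarrow$ smaller $l^{\textrm{max}}$ $\Rightarrow$ larger $p_{\mathcal{G}}(T)$ $\Rightarrow$ smaller $C(\mathcal{G})$'' breaks at the second implication unless you additionally show that the gain from shortening the worst path dominates the dilution from a larger in-degree. (The paper does not prove this either, but it does not need to, since it never invokes edge-monotonicity; your proof rests on it.)

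The second problem is the inclusion $\mathcal{G}'\subseteq\mathcal{G}^{*}$ via ``Hamiltonian-cycle extraction'' from a strongly connected $\mathcal{G}^{*}$: this step does not go through, because a strongly connected digraph need not contain a Hamiltonian cycle (e.g., the graph with edges $1\to 2$, $2\to 1$, $2\to 3$, $3\to 4$, $4\to 2$ is strongly connected, yet any closed walk covering vertex $1$ must visit vertex $2$ twice). You correctly flag this as the main obstacle, but it is not merely a bookkeeping issue about the non-uniqueness of $\mathcal{G}'$ --- the existence claim itself can fail for an arbitrary strongly connected optimum. The paper avoids the problem by construction rather than extraction: $\mathcal{G}'$ is built first as a feasible ring under the simplified constraint (\ref{conEdgesimp}) (Theorem \ref{gStructure} and Proposition \ref{optimalResult}), and $\mathcal{G}^{*}$ is then obtained by adding edges to $\mathcal{G}'$, so the inclusion holds by design and only needs to be checked for consistency with (\ref{consPath}). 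To salvage your contrapositive framing you would have to restrict the candidate set to topologies that augment some feasible ring, which is precisely the paper's weaker reading of ``necessary condition.''
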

\begin{proof}
	See Appendix \ref{appendicesE}.
\end{proof}
Corollary \ref{optFinalSolution} shows that the optimal  network structure  $\mathcal{G}^{*}$ must include the feasible solution $\mathcal{G}^{'}$ in its topology. Meanwhile, the edge set of  $\mathcal{G}^{*}$  consists of the feasible set of each UAV $i$ with exact $O_i$ components. 
Based on Corollary \ref{optFinalSolution}, the formation approaches for the edge set $\hat{\mathcal{J}}^O_i$ and the optimal UAV network $\mathcal{G}^{*}$ is summarized in Algorithm 1. 
The proposed network formation algorithm  has a complexity of $\mathbb{O}(I^{O_i})$ for each UAV $i$, where $\mathbb{O}$ is the big-O notation.  
This complexity is reasonable because the number of available RBs $O_i$ in  distributed UAV networks is usually small. 
Meanwhile, our proposed network formation algorithm is distributed, because each UAV only needs to know its A2A channel information to other UAVs, and then, it can optimize the overall network structure by removing extra UAVs within its own feasible set. 


\subsection{Optimal learning solution for the distributed CGAN framework} \label{optSec4}

Given the optimal network structure $\mathcal{G}^{*}$, based on \cite[Proposition 1 and  Theorem 1]{ferdowsi2020brainstorming}\footnote{Although  \cite{ferdowsi2020brainstorming} does not account for the conditional learning, by treating the condition as an additional part of learning input, the analytical results in \cite{ferdowsi2020brainstorming} can be extended to support the  solution to our distributed CGAN approach \cite{mirza2014conditional}.}, the optimal generator $G_i^{*}$ of each UAV $i$ will follow the distribution 
\begin{equation}
	{f^{G}_{i}}^* = f^{b}_{i} = \pi_i f_i + \sum_{j \in \mathcal{N}_i} \pi_{ij}f^G_j, 
\end{equation}
where under each  AoA-AoD condition, the generator's distribution ${f^{G}_{i}}^*$  of  UAV $i$ equals to the mixture of the channel distribution $f_i$ from its local dataset and the generator's distribution $f^G_j$ of all incoming UAVs $j \in \mathcal{N}_i$. 
In this case, during each learning iteration, the discriminator will receive two portions of the channel samples with the same information, from its locally generated samples of $G_i^{*}$, and from the mixture source of $\mathcal{S}_i$ and $\{G_j\}_{\forall j \in \mathcal{N}_i}$. As a result, the discriminator cannot distinguish the generated samples from the real data, and, thus, it will output $1$ and $0$ with an equal   probability of $0.5$. 
Consequently, given the optimal generator $G_i^{*}$,   the output of the optimal discriminator will be 
\begin{equation}\vspace{-0.1cm}
	D_i^{*}  = \frac{f^{b}_{i}}{f^{b}_{i} + {f^{G}_{i}}^*}= \frac{1}{2}. 
\end{equation}  
Once the local adversarial training of each UAV $i$  converges  to  $(G_i^{*},D_i^{*})$, the distributed information sharing process in the UAV network  converges to a unique NE \cite{ferdowsi2020brainstorming}, and the generator of each UAV $i$ has learned the entire distribution of mmWave channels, i.e., $G_i^{*} \sim 	{f^{G}_{i}}^* =f$.  
Next, each UAV $i$ can explicitly obtain the channel distribution $f$ of the trained generative model, based on the channel samples from its optimal generator $G_i^*$.  
The data-driven approach  to identify the optimal distributed CGAN solution $(G_i^{*},D_i^{*})$ has been summarized in  Algorithm \ref{algo}.

\begin{algorithm}[t] \small   
	\caption{Network formation with distributed CGAN learning for mmWave channel modeling} \label{algo}
	\begin{algorithmic}
		\State \textbf{UAV Network Formation:} \\
		1. Each UAV $i$ measures channel information $h_{ij}$ for $j \in \mathcal{I}_{-i}$,  and broadcasts the feasible UAV set $\hat{\mathcal{J}}_i$;\\
		2. If $\bigcup_{i=1}^{I} \hat{\mathcal{J}}_i = \mathcal{I}$ and $|\hat{\mathcal{J}}_i| \ge O_i$, go to step 3; otherwise, UAVs adjust their locations, and go back to step 1; \\  
		3. Start with the network graph $\mathcal{G}$ where $\mathcal{E} =  \{ e_{ij}| i \in \mathcal{I}, j \in \hat{\mathcal{J}}_i \} $; \\
		4. \textbf{For} each UAV $i$ with $|\hat{\mathcal{J}}_i|>O_i$, \\
		\quad \quad Remove one edge $e_{ij}$ from $\mathcal{E}$ where  $ j = \min_{j \in \mathcal{J}_i} l_i^{\textrm{max}}(\mathcal{G}-e_{ij}) - l_i^{\textrm{max}}(\mathcal{G})$, while guaranteeing\\
		\quad \quad  $(\bigcup_{k\in\mathcal{I}_{-i}} \hat{\mathcal{J}}_k) \cup (\hat{\mathcal{J}}_i-j )= \mathcal{I}$,  and $\exists \hat{\mathcal{J}}^O_i \subseteq (\hat{\mathcal{J}}_i - j)$, $\sum_{k \in \hat{\mathcal{J}}^O_i}P_{ik} \le P_\textrm{max}$; \\
		\quad \textbf{Until} $|\hat{\mathcal{J}}_i|=O_i$ for all $i \in \mathcal{I}$. \\ 
		\textbf{Distributed CGAN learning:} \\
		A. Initialize $G_i$ and $D_i$ for each UAV $i \in \mathcal{I}$; \\
		B. \textbf{Repeat:} Parallel for all  $i \in \mathcal{I}$: \\ 
		\quad a. Sample  $ u$  AoA-AoD   conditions: $\boldsymbol{\phi}^{(1)},\cdots, \boldsymbol{\phi}^{(u)} \sim \mathcal{U}[1,K]$, and  $u$ random inputs: $\boldsymbol{z}^{(1)}, \cdots, \boldsymbol{z}^{(u)} \sim f^z_i$;\\
		\quad b. Generate  $ u$ channel samples $G_i(\boldsymbol{z}^{(1)}|\boldsymbol{\phi}^{(1)}), \cdots, G_i(\boldsymbol{z}^{(u)}|\boldsymbol{\phi}^{(u)}) $ from the generator of UAV $i$;\\
		\quad c. Send $\pi_{oi}u$ generated sample  to each outgoing UAV $o \in \mathcal{O}_i$, and  receive $N_i$ portions of $\pi_{ij} u$ data samples  \\ %
		\quad \quad $\{\boldsymbol{s}_j^{(1)}|\boldsymbol{\phi}^{(1)} \}, \cdots, \{\boldsymbol{s}_j^{(\pi_{ij}u)}|\boldsymbol{\phi}^{(\pi_{ij}u)} \}$  from  incoming UAVs in $\mathcal{N}_i$; \\ 
		\quad c. Sample $\pi_i u$ real channel data from local dataset: $\{\boldsymbol{s}_i^{(1)}|\boldsymbol{\phi}^{(1)} \}, \cdots,  \{\boldsymbol{s}_i^{(\pi_iu)}|\boldsymbol{\phi}^{(\pi_iu)} \}    \sim \mathcal{S}_i$;\\ 
		\quad d.  Update $\boldsymbol{\theta}_i^d$ via mini-batch stochastic gradient ascent: $\nabla_{\boldsymbol{\theta}_i^d} V(D_i(\boldsymbol{\theta}_i^d)) =$ \\ 
		\quad   $\frac{1}{2u} \nabla_{\boldsymbol{\theta}_i^d} \left[\sum_{k=1}^{\pi_i u} \log(D_i(\boldsymbol{s}_i^{(k)}|\boldsymbol{\phi}^{(k)})) +   \sum_{k=1}^{u}\log(1- D_i(G_i(\boldsymbol{z}^{(k)}|\boldsymbol{\phi}^{(k)}))) + \sum_{j \in \mathcal{N}_i} \sum_{k=1}^{\pi_{ij}u} \log(D_i(\boldsymbol{s}_j^{(k)}|\boldsymbol{\phi}^{(k)}))\right]$;\\
		\quad e. Update $\boldsymbol{\theta}_i^g$ via  mini-batch stochastic gradient descent: \\ 		
		\quad \quad 
		$\nabla_{\boldsymbol{\theta}_i^g} V(G_i(\boldsymbol{\theta}_i^g)) = \frac{1}{u} \nabla_{\boldsymbol{\theta}_i^g} \sum_{k=1}^{u} \log(1- D_i(G_i(\boldsymbol{z}^{(k)}|\boldsymbol{\phi}^{(k)}))) $; \\ 
		\quad  \textbf{Until}  convergence to the NE. 
	\end{algorithmic}
\end{algorithm}  

Here, we note that the proposed distributed CGAN approach has a communication load $\mathcal{L}  = T_{\mathcal{G}}\sum_{i\in\mathcal{I}}\eta S_i \rho O_i = T_{\mathcal{G}} \eta S \rho B$, which includes all data transmissions within the UAV network  before learning completion.  
Given that  $T_{\mathcal{G}}$ has been minimized in the optimal  topology $\mathcal{G}^*$, Algorithm \ref{algo} also guarantees a minimal communication load for the proposed  learning scheme. 
We can further adjust the communication overhead of our  distributed CGAN scheme, by adapting the value of $\eta$  to meet a large range of wireless transmission constraints.  
Meanwhile, the complexity of the local adversarial training for each UAV is similar to the original CGAN framework in \cite{mirza2014conditional}. Thus, the  complexity of our distributed learning approach is around  $T_{\mathcal{G}}$-times of the original model.  
Furthermore, compared with the FL-CGAN scheme \cite{hardy2019md}, where one central agent averages the parameters of the generator and discriminator of each UAV and send back the parameter update,  our proposed approach supports a more flexible structure that is fully distributed.  
Indeed, the use of an FL-GAN scheme requires a central controller which is not available in a UAV network. 
The training process of FL-CGAN  yields communication overhead proportional to the model size, which forbids the use of large-sized models given limited communication resources. 
However,  our proposed approach allows each UAV to employ its own neural network (NN)  architecture that can be different from other UAVs.  
Meanwhile, different from the prior art in \cite{hardy2019md} where a multi-discriminators GAN (MD-GAN) framework is developed with one centralized generator and multiple, distributed discriminators, our proposed approach enables a synchronous learning for all UAVs, thus, improving time efficiency. 
More importantly, the experimental results in \cite{ferdowsi2020brainstorming} have shown that the distributed GAN learning outperforms both the FL-GAN and MD-GAN systems, in terms of modeling accuracy and communication efficiency.

\section{Simulation Results and Analysis}\label{simulation}

For our simulations, unless state otherwise, we consider an airborne  network with $I=4$ UAVs using $B=4$ RBs to provide wireless service within a geographic area of  $400 \times 100~ \textrm{m}^2$. Each UAV has a mmWave channel dataset  that covers one of the regions  without overlap, i.e., a residential area, a city park \cite{xia2020millimeter}, an urban environment \cite{khawaja2018temporal}, and a suburban area \cite{khawaja2017uav}.   
With regards to  simulation parameters, we set $M=256$, $N=64$, $K=81$, $f = 30$ GHz, $w_b = 2$ MHz, $P_{\textrm{max}} = 40$ dBm, $\sigma^2 = -174$ dBm/Hz, $\epsilon =0.1$, $p_{\tau} = 99\%$, $\tau = 10$ dB,  $t_\tau = 0.01$ s, $\rho = 11$, $\eta = 0.5$, and $S_i = 1000$ for each UAV $i$. 
We implement a NN with  four convolution layers for the discriminator, and another NN with four transposed convolution layers   for the  generator. 

\subsection{Completion time of the distributed CGAN learning}
Fig. \ref{converge1} shows the number of required iterations to complete the distributed CGAN learning, given different numbers of available RBs  and UAVs, respectively.  
The probability of learning completion in each iteration is used as the metric to evaluate the learning rate of the proposed algorithm. 
First, as shown in Fig. \ref{convergence_RB}, given a fixed number of UAVs  $I=4$, as the number of RBs $B$ increases from $4$ to $12$, the learning speed of the proposed approach increases. 
When $B=4$,  each UAV can send its generated channel samples  in each iteration to only one neighboring UAV. 
In this case, because of the limitation on the number of  A2A communication links available,  the  efficiency of sharing channel sample is low, and the distributed learning algorithm requires a long time to complete.  
As  $B$ increases, more A2A communication links are available. When $B=12$, each UAVs can send generated channel samples to three UAVs in each iteration, which forms a fully connected system, i.e., each UAV connects with all the other UAVs for information sharing in the distributed system. 
Thus, the local data of each UAV can be shared efficiently throughout the learning framework, thus leading to a fast completion rate of $T_{\mathcal{G}}=6$ epochs\footnote{The number of iterations is not for the local CGAN training within each UAV, but for the number of times that the generated channel samples  is transmitted to the neighbors by each UAV  in the distributed airborne system.}. 

Moreover,  Fig. \ref{convergence_RB} shows that the learning process has  three distinct stages. For example, when $B=4$, the maximum shortest-path length for a four-UAVs network is $l_{\text{max}}=3$. Thus, during the learning iterations $T<3$, the probability of learning completion is always zero. Meanwhile, the minimum length of the loop path is $l_{\text{loop}}^{\text{ min}} = 4$. 
Therefore, at  iteration $T = l_{\text{max}} + l_{\text{loop}}^{\text{min}} =7$, 
the generated channel samples of each UAV begin to spread in the loop paths that connect all the UAVs in the distributed network. Thus, the learning rate after $T=7$ becomes  faster.  A similar behavior can be observed for $B=8$ when the learning rate starts to increase faster at $T = 5$, and for $B=12$, the  rate dramatically increases from $T=3$. 

Next, we show the relationship between the learning speed and the number of UAVs in   Fig. \ref{convergence_UAV}, for a fixed number of RBs. 
In a larger network,  due to a longer path length in the distributed learning system,  the number of iterations required to complete the channel modeling process also increases. 
In particular, at the beginning of the learning period, a large UAV network will experience a long and inefficient data exchange stage, which leads to a slow completion speed. 
Meanwhile, by comparing Figs. \ref{convergence_RB} and \ref{convergence_UAV}, we can see that for $I=4$ and $B=4$, the distributed learning algorithm completes learning at $T_{\mathcal{G}}=19$, while for $I=12$ and $B=12$, the required completion iterations is $T_{\mathcal{G}} > 60$. 
Although in both cases each UAV has one  RB for channel information sharing, a large UAV network size results in a much slower learning rate. 
Therefore, when the number of UAVs increases, guaranteeing an efficient channel modeling approach requires increasing the total number of RBs, as well as the average number of RBs per UAV.

\begin{figure*}[!t]
	\begin{center}
		\vspace{-1cm}
		\begin{subfigure}{.49\textwidth}
			\includegraphics[width=8.9cm]{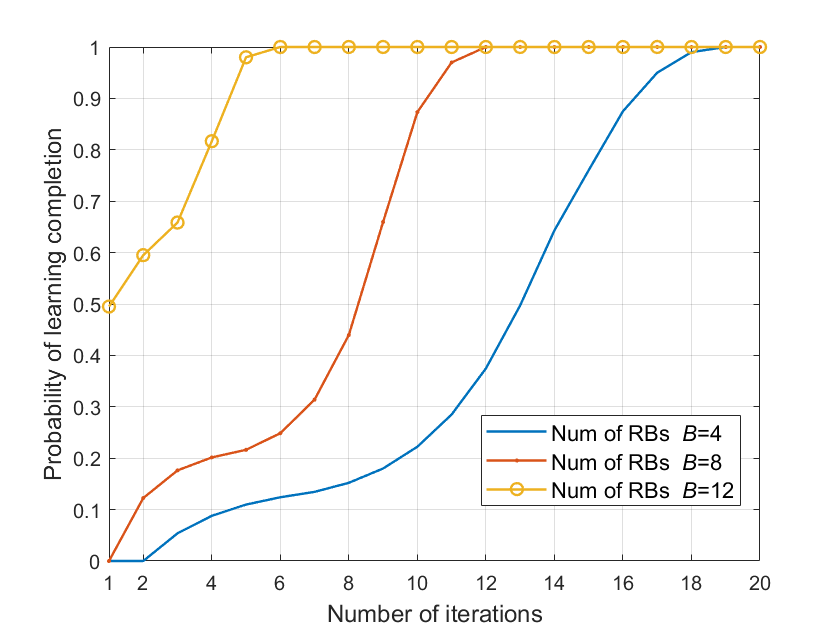}
			\caption{\label{convergence_RB}{The learning rate  increases given more RBs.}  
			}
		\end{subfigure}
		\begin{subfigure}{.49\textwidth}
			\centering
			\includegraphics[width=8.9cm]{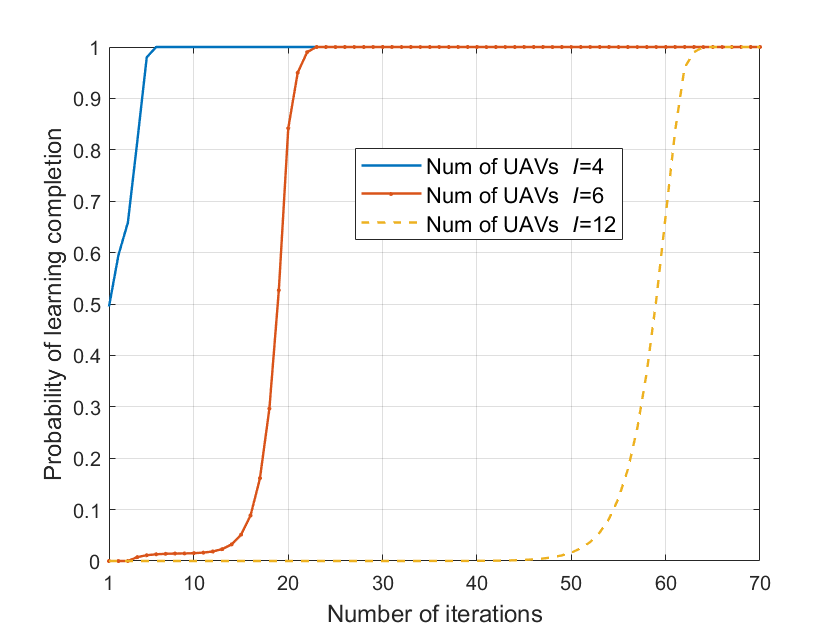}\vspace{0 cm}
			\caption{\label{convergence_UAV}{The learning rate  decreases for larger networks.}  
			}
		\end{subfigure}
		\vspace{-0.2 cm} 
		\caption{{\label{converge1} {Given more RBs for A2A communication, the proposed distributed CGAN approach yields a higher learning rate,  but this rate  decreases for a larger network with more UAVs.}  }  
		}
	\end{center}
	\vspace{- 1.2  cm}
\end{figure*}

Fig. \ref{converge2} shows  the learning rates of the distributed CGAN, for different sizes of shared data samples $\eta$ and  for different values of discriminator error $\epsilon$ at each UAV, respectively. 
Note that, in each iteration, each UAV $i$ sends $\eta S_i$  generated channel samples to its neighboring UAVs in $\mathcal{O}_i$. 
As shown in Fig. \ref{convergence_eta}, when $\eta$ becomes larger, the number of shared samples in each iteration increases, and, thus,  the learning rate of the proposed distributed CGAN approach becomes faster.  However, a larger size of the generated samples usually yields a longer transmission time. Therefore, in order to guarantee a fixed transmission duration $t_{\tau}$, a larger $\eta$ will require a better A2A channel state, or a larger A2A transmit power, so as to improve the transmission rate for A2A communications.  
Next, in Fig. \ref{convergence_epsl}, we evaluate the effect of discriminator's error $\epsilon$ in each local CGAN model to the overall learning time of the distributed learning framework.  
As the training error threshold $\epsilon$ increases from $0.01$ to $0.2$, the learning time of the distributed CGAN approach increases from $T_{\mathcal{G}} = 17$ to $20$. 
By comparing with  Figs. \ref{converge1} and \ref{convergence_eta},   the effect of the local training error $\epsilon$ on the learning rate is  very limited. 
Therefore, we can conclude that our proposed learning framework has a high tolerance to the local training error per UAV. 
Meanwhile, a high tolerance of the  training error enables each UAV to choose different NN structures, based on its own computational ability and on-board energy, for its local CGAN learning. Thus, our approach supports a very flexible distributed structure for each learning agent.    

\begin{figure*}[!t]
	\begin{center}
		\vspace{-1cm}
		\begin{subfigure}{.49\textwidth}
			\includegraphics[width=8.9cm]{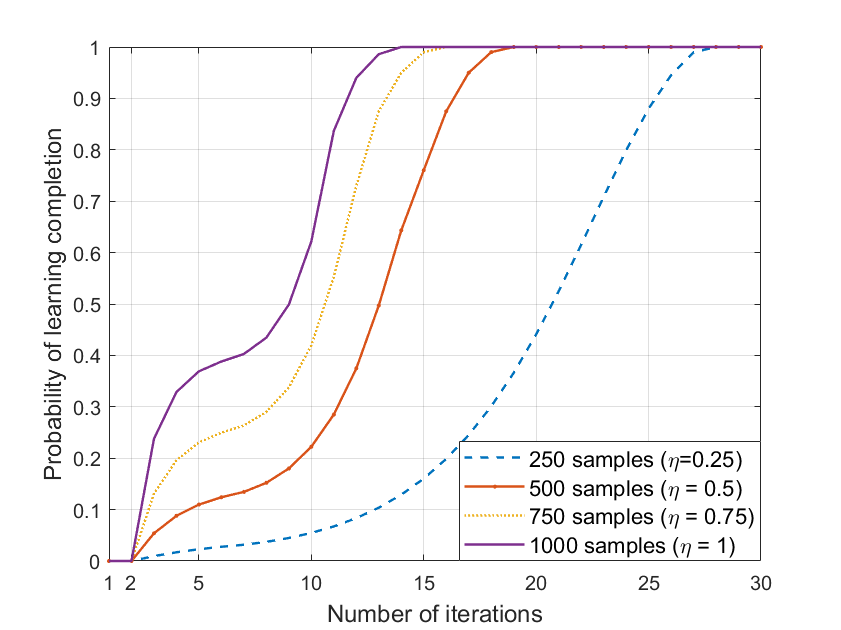}
			\caption{\label{convergence_eta} {The  learning rate increases  by sharing more generated samples in each iteration.  }  
			}
		\end{subfigure}
		\begin{subfigure}{.49\textwidth}
			\centering
			\includegraphics[width=8.9cm]{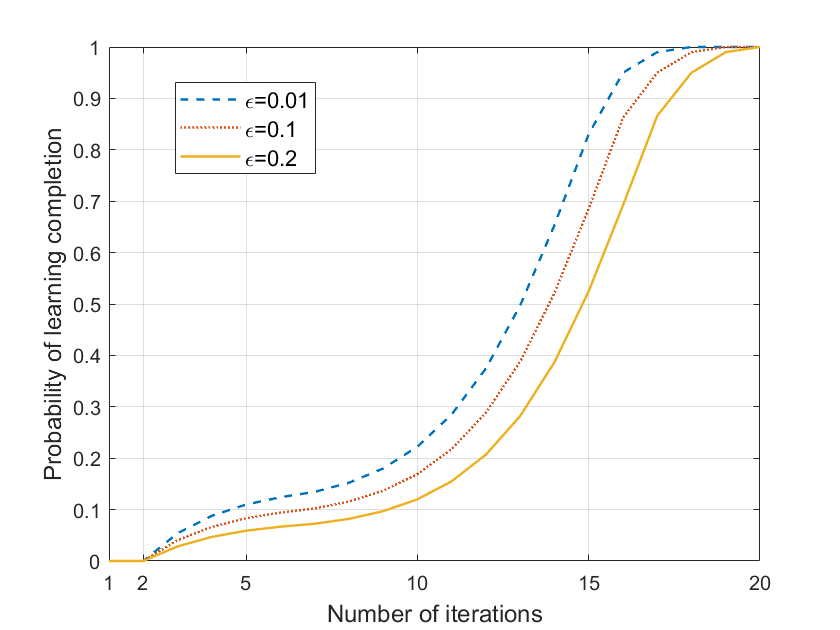}\vspace{0 cm}
			\caption{\label{convergence_epsl} {The learning rate decreases given a larger training error for each local discriminator.}  
			}
		\end{subfigure}
		\vspace{-0.2 cm} 
		\caption{{\label{converge2}{Given more generated samples and a smaller  training error, the proposed distributed CGAN approach yields a larger learning rate.}} 
		}
	\end{center}
	\vspace{- 1.2  cm}
\end{figure*} 

\subsection{Learning results for A2A mmWave channel modeling}

\begin{figure*}[!t]
	\begin{center}
		\vspace{-1cm}
		\begin{subfigure}{.49\textwidth}
			\includegraphics[width=8.9cm]{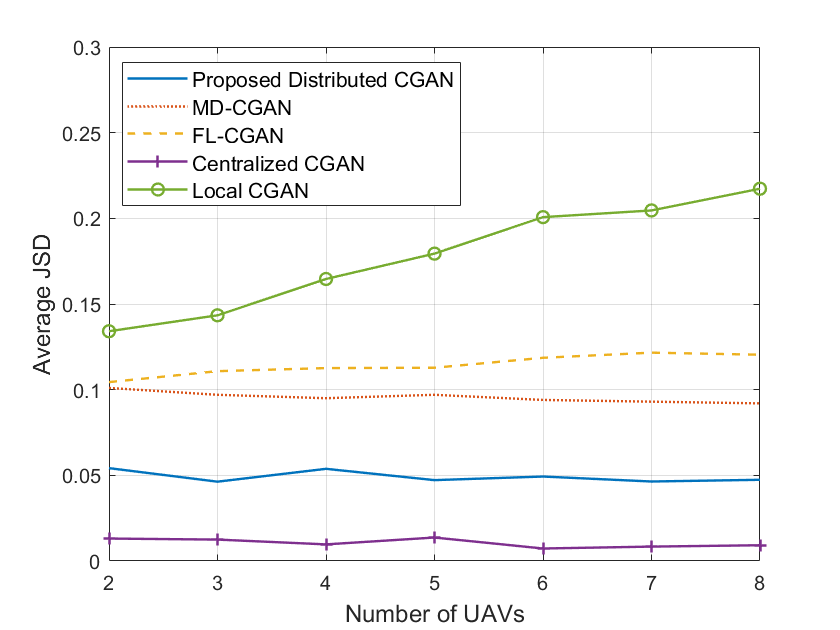}
			\caption{\label{jsd} The proposed distributed CGAN scheme outperforms other distributed and local baseline methods.  
			}
		\end{subfigure}
		\begin{subfigure}{.49\textwidth}
			\centering
			\includegraphics[width=8.9cm]{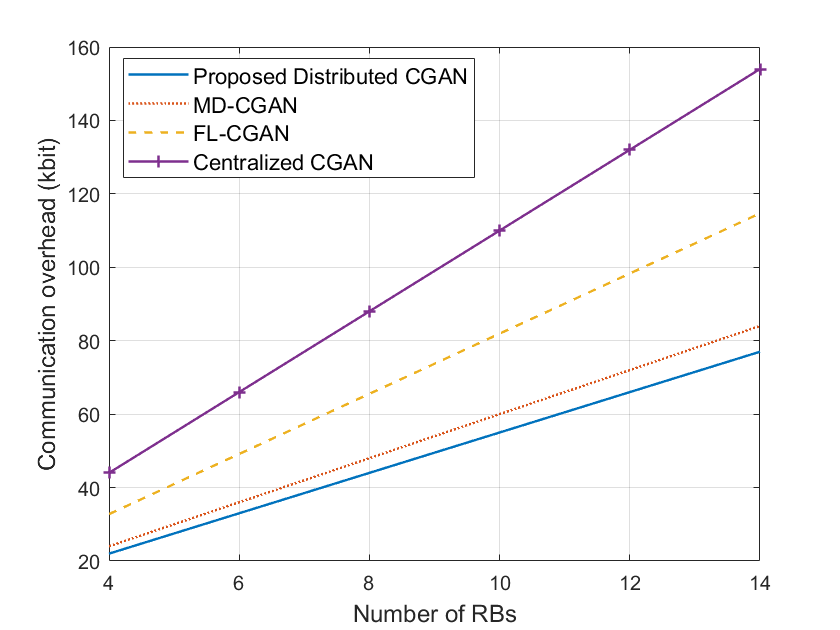}\vspace{0 cm}
			\caption{\label{commLoad} The communication overhead per iteration increases given more RBs.  
			}
		\end{subfigure}
		\vspace{-0.2 cm} 
		\caption{{\label{learning_result} Our proposed distributed CGAN approach yields the highest modeling accuracy, and the lowest communication load, compared with the distributed and local baseline schemes. }  
		}
	\end{center}
	\vspace{- 1.2  cm}
\end{figure*}  

In this section, we evaluate the learning performance of our proposed distributed CGAN approach compared with four baselines: 
a stand-alone CGAN model per UAV without information sharing, a centralized CGAN scheme based on raw channel data from all UAVs,   an FL-CGAN, and an MD-CGAN distributed learning scheme.  
After the CGAN training is completed, we calculate an empirical PDF for the learned channel distribution, by sampling a large number of channel data from the local generator  $G_i$ at each UAV $i$. 
To evaluate the modeling performance of different CGAN approaches, the criterion of Jensen-Shannon divergence (JSD) is applied to count the average distance between the real channel distribution and the learned distribution in each generator, i.e., 
$\textrm{JSD} =  \frac{1}{2I}\sum_{i \in \mathcal{I}}\sum_{s } \left[G_i(s) \log \left( \frac{G_i(s)}{F(s)} \right) +    F(s) \log \left( \frac{F(s)}{G_i(s)} \right)\right]$, 
where $I$ is the total number of UAVs, $s$ is the generated channel samples, and $F$ is the empirical PDF of the entire channel dataset. 
Here, a lower value of JSD indicates a higher learning accuracy.   
Fig. \ref{jsd}  shows that the modeling accuracy of the proposed distributed CGAN approach outperforms the local CGAN, MD-CGAN, and FL-CGAN baselines. 
First, given more UAVs in the system, each UAV covers a smaller service area, and the local generator distribution only applies to a limited spatial domain. Thus, the modeling accuracy of the local learning scheme decreases for  a larger network size.  
However, using the distributed learning approach, each UAV can learn the A2G channel property over a larger location domain from the generated samples of other UAVs. 
Thus, the modeling accuracy for all three distributed approaches stays the same for different network sizes.  
Given that our proposed distributed CGAN scheme does not require a central agent to aggregate information, it leads to a more robust structure compared with  MD-CGAN and FL-CGAN, and  it yields a higher modeling accuracy among all the distributed CGAN approaches.  
However, due to a limited training time and the inevitable training error at each UAV, the overall distributed CGAN learning of the airborne network may converge to a local optimum. 
This explains the  performance gap between the proposed distributed learning scheme and  the centralized raw data sharing method.    
To mitigate this gap, a deeper NN framework and a longer training time $t_{\epsilon}$ are needed  to decrease the training error for the local CGAN  at each UAV. 
However, those would come at the expense of additional delay or computations.

In Fig. \ref{commLoad}, we show the relationship between the communication overhead per iteration and the number of available RBs in the distributed learning network.  
Note that, the local CGAN does not involve any data sharing, thus it is not shown in the figure. 
Meanwhile, in the centralized  CGAN scheme, all of the raw channel samples for each UAV will be shared to all the other agents in the first iteration. Thus, the total number of iterations for the centralized CGAN scheme is  only one.  
Fig. \ref{commLoad} shows that 
our proposed distributed CGAN yields the lowest overhead compared with all baselines. 
For the MD-CGAN method, in each iteration, a central agent sends the generated data sample to each UAV, and then, each UAV sends the discriminator results back to the agent. 
This two-directional communication yields a  higher overhead, compared with our proposed scheme where each UAV only sends its generated samples to a dedicated set of UAVs in one direction. 
The communication overhead of the FL-CGAN scheme is proportional to the parameter size of the NN models in both the generator and discriminator. 
For channel modeling tasks, it is often the case that  the size of the CGAN parameters are larger than the size of generated samples in each iteration. Thus, the FL-CGAN method results in the highest communication load among all three distributed schemes. 

\subsection{Communication performance for online deployment }

\begin{figure}[!t]
	\begin{center}
		\vspace{-1cm}
		\includegraphics[width=8.9cm]{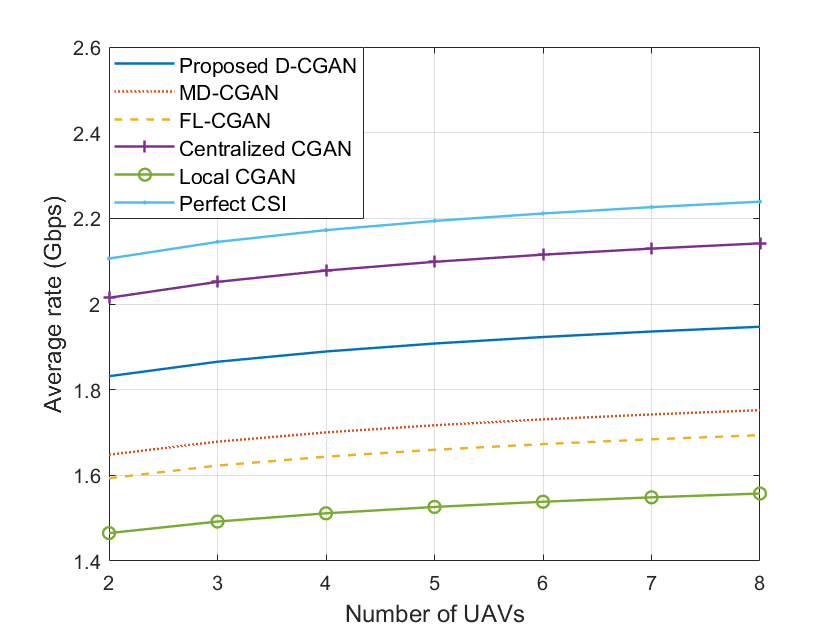}
		\vspace{-0.1cm}
		\caption{\label{rate}  {Our proposed  approach outperforms distributed and local baseline schemes, as the number of UAVs increases. }}
	\end{center}\vspace{-1.2cm}  
\end{figure}

Fig. \ref{rate} shows the online performance of the proposed channel modeling approach and three baseline schemes.    
We run each method for $1000$ times, and the average transmission rate of the UAV A2G communications is used as the performance metric, with a downlink bandwidth of $50$ MHz \cite{dilli2020analysis}.    
In order to support online deployment using the trained CGAN model, we first obtain an empirical PDF of the learned A2G channel distribution from the generator  at each UAV.   
Next, a maximum a posteriori  (MAP) estimator is used to determine the optimal beamforming and combining vectors for the UAV and its served UE, respectively, for the downlink  transmissions.   
Given more UAVs in the network, the average data rates of all schemes naturally increase, due to an averagely smaller service area for each UAV. 
Fig. \ref{rate} first shows that the proposed distributed CGAN approach improves the average data rate by around $10\%$ and $15\%$, compared with the MD-CGAN and FL-CGAN schemes, respectively.  
Meanwhile, the proposed learning scheme yields a two-fold increase in the downlink rate, compared with the local CGAN method. 
However, due to  training errors of mmWave A2G channel models,  the proposed method yields a lower data rate, compared with a perfect channel state information (CSI) scheme. 
Meanwhile,  due to a limited data transmission capability between UAVs, the proposed distributed approach  yields a lower transmit rate, compared with the centralized method. 
However, given that the perfect CSI is often not available and the UAV network does not have a central agent to build wireless links with each UAV for data collection,  the perfect-CSI scheme and centralized method are not practical solutions.

\section{Conclusion}\label{conclusion}

In this paper, we have proposed a novel framework for mmWave channel modeling  in a UAV cellular network.  
First, the channel measurement approach has been developed to collect the real-time information.  
In order to characterize mmWave A2G links in a large spatial-temporal space with different AoA-AoD directions,  a cooperative learning framework, based on the distributed CGAN, has been developed for each UAV to learn the mmWave channel distribution from other agents in a  fully-distributed manner.  
We have derived the necessary and sufficient conditions for the optimal network topology  that maximizes the learning rate, and characterized the learning solution for the generator and  discriminator per UAV.   
Simulation results have shown that the learning rate will increase by using more A2A communication RBs and sharing more generated samples in each iteration. However, a larger UAV network size and a higher training error will increase learning time. 
The results also show that the proposed CGAN approach yields a higher learning accuracy and a larger average rate for UAV downlink communications, compared with a local CGAN, MD-CGAN and FL-CGAN baseline schemes. 

\appendices
\section{Proof of Theorem 1}\label{appendicesA}
In order to facilitate our analysis and derive a closed-form expression of the  probability of the learning completion for the distributed learning framework,  we assume that the sampling process from each generator in each learning iteration is an independent process, and each generated channel sample from the same generator  contains the same amount of channel information. 
Thus, in terms of the channel information from the local dataset, each channel sampling can be considered as an i.i.d process.  
Next,  a recursion approach is used to prove Theorem 1. 
Let  $\{i, u_1, \cdots, u_{l^{\textrm{\normalfont max}}}\}$ be the ordered UAVs on the maximum-length shortest-path. 
Then, we consider a portion of channel information, whose size equals to exactly the information amount that a single channel sample can contain.
\subsection{Single-time probability}
First, we  derive the probability that this portion of information can be transmitted from the dataset $\mathcal{S}_i$ of UAV $i$ to the generator of UAV $u_{l^{\textrm{\normalfont max}}}$ just at the $T$-th iteration. 
\subsubsection{$T<l^{\textrm{\normalfont max}}$} 
In each iteration, any channel sample can only be transmitted to the next-hop UAV. Thus, when $T<l^{\textrm{\normalfont max}}$, no information can be successfully delivered from UAV $i$ to UAV $u_{l^{\textrm{\normalfont max}}}$ through the shortest path, and the probability for successful information delivery is zero.   

\subsubsection{$l^{\textrm{\normalfont max}} \le T < l^{\textrm{\normalfont max}} + l^{\textrm{\normalfont min}}_{\textrm{\normalfont loop}}$} 
During the first iteration $T=1$, UAV $i$ will send $\eta S$  generated channel samples to UAV $u_1$. 
Thus, the probability for transmitting the considered portion of information from $i$ to $u_1$ equals to the sampling ratio $\eta$. Meanwhile, given that the training error for the local generator is $\epsilon$, there is only $(1-\epsilon)$ possibility that the information can be accurately transmitted. Thus, 
the probability that the considered portion of information can be successfully transmitted from $i$ to $u_1$ $p^{\textrm{in}}_1 = (1-\epsilon)\eta$. 
At the same time, UAV $u_1$ receives generated samples from the other $N-1$ neighboring UAVs in $\mathcal{N}_{u_1}$, 
and thus, the percentage of UAV $i$'s information in the dataset of UAV $u_1$  becomes $p^{\textrm{out}}_1 =  \frac{p^{\textrm{in}}_1}{1+ N\eta} =  \frac{(1-\epsilon)\eta}{1+ N\eta}$. 
Next, when $T=2$,  UAV $u_1$ generates $\eta S$ samples and sends them to UAV $u_2$. Then, the probability that UAV $i$'s information will be transmitted from $u_1$ to $u_2$ becomes   $p^{\textrm{in}}_2 = \eta p^{\textrm{out}}_1 =  \frac{[(1-\epsilon)\eta]^2}{1+ N\eta}$. 
Due to generated data from other UAVs, the percentage of UAV $i$'s  information at UAV $u_2$ will be reduced  to $p^{\textrm{out}}_2 =    \frac{p^{\textrm{in}}_2}{1+ N\eta} = \frac{[(1-\epsilon)\eta]^2}{(1+ N\eta)^2}$. This process will continue, until the sample information is delivered to UAV $u_{l^{\textrm{max}}}$ at $T = l^{\textrm{max}}$. 
In this case, we can find the probability that UAV $i$'s information arrives at UAV $u_{l^{\textrm{max}}}$ with $l^{\textrm{max}}$ hops will be $p^{\textrm{in}}_{l^{\textrm{max}}} = \frac{[(1-\epsilon)\eta]^{l^{\textrm{max}}}}{(1+ N\eta)^{l^{\textrm{max}}-1}}$.   
In summary, the rule is that, at each iteration, the percentage of the previously owned channel samples by each UAV gets reduced by $\frac{1}{1+ N\eta}$, due to the arrival of generated samples from $N$ neighboring UAVs. Meanwhile, the  percentage of channel samples will be reduced by the sampling ratio and a training error of $(1-\epsilon)\eta$ for each hop along the path. 
Consequently, the probability that a portion of UAV $i$'s channel information can be successfully delivered to UAV $u_{l^{\textrm{max}}}$ within the $T$-th learning iteration is $p^{\textrm{in}}_{l^{\textrm{max}}}(T) = \frac{[(1-\epsilon)\eta]^{l^{\textrm{max}}}}{(1+ N\eta)^{T-1}}$. 

\subsubsection{ $T \ge l^{\textrm{\normalfont max}} + l^{\textrm{\normalfont min}}_{\textrm{\normalfont loop}}$} 
After $T\ge l^{\textrm{min}}_{\textrm{loop}}$, UAV $i$ starts receiving information of its own data distribution from its neighboring UAVs in $\mathcal{N}_i$, due to the existence of  looping data flow. 
In this case, the reduction of the information percentage in each iteration will be higher than $\frac{1}{1+N\eta}$. 
Thus, an acceleration coefficient $\gamma(T) >1$, where $\gamma(T \rightarrow +\infty) = 1+N\eta $, will be added in the iteration reduction ratio, and the probability of  information delivery becomes $p^{\textrm{in}}_{l^{\textrm{max}}}(T) = \frac{[(1-\epsilon)\eta]^{l^{\textrm{max}}} }{(1+ N\eta)^{T-1}} \prod_{i=l^{\textrm{max}} + l^{\textrm{min}}_{\textrm{loop}} } ^T \gamma(i)$ for  $T\ge l^{\textrm{min}}_{\textrm{loop}}$.

\subsection{Cumulative probability} 
Next, we will derive the accumulative probability that the considered portion of UAV $i$ has been successfully delivered to UAV $u_{l^{\textrm{max}}}$ after  $T$ iterations. 
\subsubsection{$T<l^{\textrm{\normalfont max}}$} 
Given that $p^{\textrm{in}}_{l^{\textrm{max}}}(T) = 0$, the accumulative probability $p_{\mathcal{G}}(T) = 0$ for $T<l^{\textrm{max}}$.  

\subsubsection{$T= l^{\textrm{\normalfont max}}$}
Based on the above analysis,  $p_{\mathcal{G}}( l^{\textrm{max}}) =  p^{\textrm{in}}_{l^{\textrm{max}}}(  l^{\textrm{max}}) =  \frac{[(1-\epsilon)\eta]^{l^{\textrm{max}}}}{(1+ N\eta)^{l^{\textrm{max}}-1}}$.

\subsubsection{$l^{\textrm{\normalfont max}} < T < l^{\textrm{\normalfont max}} + l^{\textrm{\normalfont min}}_{\textrm{\normalfont loop}}$} 
In this case, the accumulative probability of successful information delivery can be  calculated, based on the chain rule, where 
$p_{\mathcal{G}}(T) = p^{\textrm{in}}_{l^{\textrm{max}}}(  l^{\textrm{max}}) + [1-p^{\textrm{in}}_{l^{\textrm{max}}}(  l^{\textrm{max}})] p^{\textrm{in}}_{l^{\textrm{max}}}(  l^{\textrm{max}}+1) + \cdots + \prod_{i=l^{\textrm{max}} }^{T-1} [1-p^{\textrm{in}}_{l^{\textrm{max}}}(i)] p^{\textrm{in}}_{l^{\textrm{max}}}( T  )$. 
Consequently, we can derive  
\begin{equation*}
	p_{\mathcal{G}}(T) = p_{\mathcal{G}}(l^{\textrm{max}})  +   \sum_{i=l^{\textrm{max}}+1}^{T} \left[  \prod_{j=l^{\textrm{max}}}^{i-1}\left( 1- \frac{ [(1-\epsilon)\eta]^{l^{\textrm{max}}}}{(1+N\eta)^{j-1}}  \right)\right]  \frac{  [(1-\epsilon)\eta]^{l^{\textrm{max}}}}{(1+N\eta)^{i-1}} .
\end{equation*}

\subsubsection{$ T \ge l^{\textrm{\normalfont max}} + l^{\textrm{\normalfont min}}_{\textrm{\normalfont loop}}$} 
When the loop data flow starts, the probability becomes 
\begin{equation*}
	\begin{aligned}
		& p_{\mathcal{G}}(T) =  p_{\mathcal{G}}(l^{\textrm{max}}+l^{\textrm{min}}_{\textrm{loop}}-1)  + \\   
		&\sum_{i=l^{\textrm{max}}+l^{\textrm{min}}_{\textrm{loop}}}^T \left[  \prod_{j=l^{\textrm{max}}+l^{\textrm{min}}_{\textrm{loop}}-1}^{i-1}\left( 1- \frac{  [(1-\epsilon)\eta]^{l^{\textrm{max}}}}{(1+N\eta)^{j-1}} \prod_{k=l^{\textrm{max}} + l^{\textrm{min}}_{\textrm{loop}} -1} ^j \gamma(k) \right)\right]  \frac{ [(1-\epsilon)\eta]^{l^{\textrm{max}}}}{(1+N\eta)^{i-1}} \prod_{l=l^{\textrm{max}} + l^{\textrm{min}}_{\textrm{loop}} } ^i \gamma(l),
	\end{aligned} 
\end{equation*}
where $\gamma(l^{\textrm{max}} + l^{\textrm{min}}_{\textrm{loop}} -1) = 1$. 
This concludes our proof of Theorem \ref{prob}.

\section{Proof of Theorem 2} \label{appendicesB}

Based on the definition of a strongly connected graph, each vertex  must have at least one in-degree and one out-degree. 
Thus, according to (\ref{consPath}), a strongly connected  structure requires each UAV to at least have an incoming edge and an outgoing edge, where $N_i \ge 1$ and $O_i\ge 1$. 
Meanwhile, given  (\ref{conEdgesimp}), the number of edges equals to the number of UAVs, i.e., $\sum_{i=1}^{I} N_i = I$ and $\sum_{i=1}^{I} O_i = I$. Therefore,  we can derive that $N_i = O_i = 1$, $\forall i \in \mathcal{I}$.  
Based on the observation that each UAV has only an incoming edge and an outgoing edge, the structure of the UAV network must form a ring, so as to keep the strongly connection property, i.e., $\mathcal{N}_i \cap \mathcal{N}_j = \emptyset$, and $\mathcal{O}_i \cap \mathcal{O}_j = \emptyset$,   $\forall i,j \in \mathcal{I}$ and $i \ne j$. This concludes our proof of Theorem \ref{gStructure}. 

\section{Proof of Proposition \ref{existence}} \label{appendicesC}

We prove the necessary conditions in Proposition \ref{existence} by contradiction.   
First, we assume that there exists a UAV $i \in \mathcal{I}$, such that $\mathcal{J}_i = \emptyset$. Then, we have $O_i = 0$, which contradicts to the strong connected network requirement  where $O_i \ge 1$. 
Next, if $\sum_{i=1}^{I} \mathcal{J}_i \subset \mathcal{I}$, then, there exists at least one UAV $i$, such that none of the other UAVs sends any generated channel samples to it. Thus, for  UAV $i$, $N_i=0$, which again contradicts to the strong connected network requirement  where $N_i \ge 1$. 
Consequently,   $\bigcup_{i=1}^{I} \mathcal{J}_i = \mathcal{I}$ and $\forall i,\mathcal{J}_i \ne \emptyset$ must hold to guarantee the existence of a strongly connected structure for the optimal UAV network $\mathcal{G}^{*}$.  
 
\section{Proof of Proposition \ref{optimalResult}} \label{appendicesD}

We first prove that  Proposition \ref{optimalResult} provides a feasible solution to problems (\ref{equsOpt}) and (\ref{equsOpt2}), and then, show its optimality to problem  (\ref{equsOpt2}). 
First, given that  $\mathcal{E} \subseteq \{ e_{ij}| i\in \mathcal{I}, j\in \mathcal{J}_i  \}$, all the communication edges are formed based on the feasible UAV sets. Thus, constraints (\ref{consPower})-(\ref{consTime}), as well as (\ref{consPower2})-(\ref{consTime2}), naturally hold, according to the definition of the feasible set in (\ref{feasiableSet}). 
Meanwhile, given that $l_i^{\textrm{max}}(\mathcal{G}^{*}) = I-1$ hold for $i \in \mathcal{I}$, the UAV network must have a ring structure, which leads to a strongly connected graph that satisfies constraint (\ref{consPath}), and guarantees an equal number of communication links to the number of UAVs  that satisfies constraint (\ref{consEdge}). This concludes the proof of a feasible solution. 
Next, based on Theorem \ref{gStructure}, a ring is the only possible structure for the UAV network, where  $l^{\textrm{max}}(\mathcal{G}^{*})$ always has a fixed value of $I-1$. Given the fixed value of $l^{\textrm{max}}(\mathcal{G}^{*})$, the learning rate will be constant, for any ordered set of ring structures. Therefore, the feasible solution leads to one identical outcome of the learning time for   (\ref{equsOpt2}), and it is the optimal result, as long as the UAV network has a ring structure. 

\section{Proof of Corollary \ref{optFinalSolution}} \label{appendicesE}

First, $\mathcal{G}^{'}$ has been proved  in Proposition \ref{existence} to meet the constraint of a strongly connected network. 
Thus, the requirement $\mathcal{G}^{'} \subseteq \mathcal{G}^{*} $ will ensure that $ \mathcal{G}^{*} $ also satisfies the strongly connected property in constraint (\ref{consPath}). 
Then,  $\mathcal{E} = \{ e_{ij}| i\in \mathcal{I}, j\in \hat{\mathcal{J}}^O_i  \}$ is a necessary condition for satisfying all the other constraints in (\ref{equsOpt}), where $j\in \hat{\mathcal{J}}_i $ ensures that the communication constraints (\ref{consPower}) - (\ref{consTime}) are met according to the definition in (\ref{feasiableSetNew}), and $|\hat{\mathcal{J}}_i|=O_i$  supports a homogeneous network structure and minimizes the maximum shortest-path in $\mathcal{G}^{*}$ while satisfying the edge constraint (\ref{consEdge}).

\bibliographystyle{IEEEtran}
\bibliography{references}

\end{document}